\newfont{\bbb}{bbm10 scaled 1100}  
\newfont{\bbbs}{bbm10 scaled 800}  
\newtheorem{defi}{Definition}
\newtheorem{theo}{Theorem}
\newtheorem{prop}{Proposition}
\newtheorem{lemm}{Lemma}
\newtheorem{coro}{Corollary}
\newtheorem{conj}{Conjecture}
\newtheorem{exam}{Example}
\newenvironment{definition}[1]{\begin{defi} \rm \label{df-#1} }{\end{defi}}
\newenvironment{theorem}[1]{\begin{theo} \rm \label{thm-#1} }{\end{theo}}
\newenvironment{corollary}[1]{\begin{coro} \rm \label{cor-#1} }{\end{coro}}
\newenvironment{conjecture}[1]{\begin{conj} \rm \label{con-#1} }{\end{conj}}
\newenvironment{proof}{\begin{trivlist} \item[\hspace{\labelsep}\bf Proof:]}{\hfill $\Box$\end{trivlist}}
\newenvironment{itemise}{\begin{list}{$\bullet$}{\leftmargin 18pt
                        \labelwidth\leftmargini\advance\labelwidth-\labelsep
                        \topsep 0pt \itemsep 0pt \parsep 0pt}}{\end{list}}
\newcommand{\df}[1]{Definition~\ref{df-#1}}
\newcommand{\thm}[1]{Theorem~\ref{thm-#1}}
\newcommand{\cor}[1]{Corollary~\ref{cor-#1}}
\newcommand{\sect}[1]{Section~\ref{sec-#1}}
\newcommand{\plat}[1]{\raisebox{0pt}[0pt][0pt]{#1}}     
\newcommand{\goto}[1]{\stackrel{#1}
	{\raisebox{0pt}[3pt][0pt]{$\longrightarrow$}}}	
\newcommand{\gonotto}[1]{\hspace{4pt}\not\hspace{-4pt}	
	\stackrel{#1\ }{\longrightarrow}}
\newcommand{\dto}[1]{\mathrel{\stackrel{#1\ }         	
	{\raisebox{0pt}[4pt][0pt]{$\Longrightarrow$}}}}	
\newcommand{\infr}[2]                                   
{\rule{0mm}{6mm} \begin{array}{c} #1\\[0.1ex] \hline \rule{0ex}{2.7ex}#2 \end{array}}
\newcommand{\mathbb}[1]{\mbox{\bbb #1}}                 
\newcommand{\pow}{{\cal P}}			        
\newcommand{\Act}{{\it A}}                              
\newcommand{\ptr}{{\it ptraces}}                        
\newcommand{\traces}{{\it traces}}
\newcommand{\deadlocks}{{\it deadlocks}}
\newcommand{\ct}{{\it CT}}                              
\newcommand{\infinite}{{\it inf}}                       
\newcommand{\infd}{{\it inf\!\!}_\bot}
\newcommand{\infdd}{{\it inf\!\!}_d}
\newcommand{\ini}{{\it initials}}
\newcommand{\failures}{{\it failures}}
\newcommand{\faild}{{\it failures\!}_\bot}
\newcommand{\faildd}{{\it failures\!}_d}
\newcommand{\diverg}{{\it divergences}}
\newcommand{\divd}{{\it divergences\!}_\bot}
\newcommand{\act}{\mbox{\sc action}}
\newcommand{\eff}{\mbox{\sc effect}}
\renewcommand{\phi}{\varphi}
\title{The Coarsest Precongruences\\ Respecting Safety and Liveness Properties}
\author{Rob van Glabbeek
\institute{NICTA, Sydney, Australia}\vspace{-2pt}
\institute{School of Computer Science and Engineering,
University of New South Wales, Sydney, Australia}}
\begin{document}
\maketitle

\begin{abstract}
This paper characterises the coarsest refinement preorders on labelled
transition systems that are precongruences for renaming and partially
synchronous interleaving operators, and respect all safety, liveness,
and conditional liveness properties, respectively.
\end{abstract}

\section{Introduction}

The goal of this paper is to define and characterise certain semantic
equivalences $\equiv$ and refinement preorders $\sqsubseteq$ on
processes.  The idea is that $p\equiv q$ says, essentially, that for
practical purposes processes $p$ and $q$ are equally suitable, i.e.\
one can be replaced for by the other without untoward side
effects. Likewise, $p \sqsubseteq q$ says that for all practical
purposes under consideration, $q$ is at least as suitable as $p$,
i.e.\ it will never harm to replace $p$ by $q$. Thus, one should have
that $p \equiv q$ iff both $p \sqsubseteq q$ and $q \sqsubseteq p$.

Naturally, the choice of $\equiv$ and $\sqsubseteq$ depends on how one
models a process, and what range of practical purposes one
considers. I this paper I restrict myself to one of the most basic
process models: \emph{labelled transition systems}. I study processes
that merely perform actions $a$, $b$, $c$, \ldots which themselves are
not subject to further investigations. These actions may be
instantaneous or durational, but they may not last forever; moreover,
in a finite amount of time only finitely many actions can be carried
out.  I distinguish between \emph{visible} actions, that can be
observed by the environment of a process, and whose occurrence can be
influenced by this environment, and \emph{invisible} actions, that
cannot be observed of influenced. Since there is no need to
distinguish different invisible actions, I can just as well consider
all of them to be occurrences of the same invisible action, which is
traditionally called $\tau$. Furthermore, I abstract from real-time
and probabilistic aspects of processes.

This choice of process model already rules out many practical purposes
for which one process could be more suitable than another. I can for
instance not compare processes on speed, since this is an issue that
my process model has already abstracted from. In fact, the only
aspects of processes that are captured by such a model and that may
matter in practical applications, are the sequences of actions that a
process may perform in a, possibly infinite, run, performed in, or in
collaboration with, a certain environment. As the invisible action is
by definition unobservable, it moreover suffices to consider sequences
of \emph{visible} actions. A sequence of visible actions that a
process $p$ may perform is called a \emph{trace} of $p$; it is a
\emph{complete trace} if it is performed during a maximal run of $p$,
one that cannot be further extended. Obviously, the traces of $p$ are
completely determined by the complete traces of $p$, namely as their
prefixes.

Based on the considerations above, it is tempting to postulate that
the relevant behaviour of a process, as far as discernible in terms of
labelled transition systems, is completely determined by its set of
complete traces; hence two processes should be equivalent if they have
the same competed traces. However, this argument bypasses the role of
the environment in influencing the behaviour of a process. Often, one
allows the actions a process performs to be synchronisations with the
environment, and the environment can influence the course of action of
a process by synchronising with some actions but not with others.
Therefore, a safe over-approximation of the relevant behaviour of
a process is not merely its set of complete traces, but rather its set
of complete traces obtained as a function of the environment the
process is running in.

In this paper I consider a neutral environment in which all courses of
action are possible and the behaviour of a process is indeed
determined by its complete traces. All other ways in which the
environment may influence the behaviour of a process are given in terms
of \emph{contexts} build from other processes and certain composition
operators. It could for instance be that in the neutral environment
there is no way to tell the difference between processes $p$ and\ $q$;
maybe because they have the same set of complete traces. However, for
a suitable parallel composition operator $\|$ and other process $r$ it
may be that there is a manifest practical difference between $p\|r$
and $q\|r$, so that one has $p\|r \not\equiv q\|r$. Now that fact
alone is taken to be enough reason to postulate that $p \not\equiv q$.
Namely the difference between $p$ and $q$ can be spotted by placing
them in a context $\_\!\_\|r$. This context can be regarded as an
environment in which the behaviours of $p$ and $q$ differ.

Following this programme, a suitable semantic equivalence on processes
is defined in terms of two requirements. First of all the behaviour of
processes is compared in the neutral environment. This entails
isolating a class $\cal C$ of properties $\phi$ of processes that are
deemed relevant in a given range of applications. One then requires
for $p \equiv q$ to hold that $p$ and $q$ have the same properties
from this class:
\begin{equation}\label{respect}
p \equiv q ~~\Rightarrow~~ \forall \phi\in {\cal C}.\; (p \models \phi
\Leftrightarrow q \models \phi)
\end{equation}
where $p\models\phi$ denotes that process $p$ has the property $\phi$.
An equivalence $\equiv$ that satisfies this last requirement is said
to \emph{respect} or \emph{preserve} the properties in $\cal C$.
The second requirement entails selecting a class $\cal O$ of useful
operators for combining processes. One then requires that for any
context $C[\_\!\_]$ (such as $\_\!\_\| r$) built from operators from
$\cal O$ and arbitrary processes, that
\begin{equation}\label{congruence}
p \equiv q ~~\Rightarrow~~ C[p] \equiv C[q].
\end{equation}
An equivalence $\equiv$ that satisfies this last requirement is called
a \emph{congruence} for $\cal O$.
For the sake of intuition it may help to consider the contrapositive
formulation of these implications: if there exists a property $\phi$
in $\cal C$ that holds for $p$ but not for $q$, or vice versa, then
$p$ and $q$ cannot be considered equivalent. Likewise $C[p] \not\equiv
C[q]$ implies $p \not\equiv q$.

These two requirements merely insist that the desired equivalence
$\equiv$ does not identify processes that in some context differ on
their relevant properties. They are satisfied by many equivalence
relations, including the identity relation, that distinguishes all
processes. In order to characterise precisely when two systems have
the same relevant properties in any relevant context, one takes the
\emph{coarsest} equivalence satisfying (\ref{respect}) and
(\ref{congruence}); the one making the most identifications.  This
equivalence is called \emph{fully abstract} w.r.t.\ $\cal C$ and $\cal
O$. It always exists, and, as is straightforward to check, is characterised by
$$p \equiv q ~~\Leftrightarrow~~ \forall {\cal O}\mbox{-context } C[\_\!\_].\;
\forall\phi\in {\cal C}.\;  (C[p] \models \phi \Leftrightarrow C[q] \models \phi).$$

When, for a certain application, the choice of $\cal C$ and $\cal O$
is clear, the unique equivalence relation that is fully abstract
w.r.t.\ $\cal C$ and $\cal O$ is the right semantic equivalence for
that application. However, when the choice of $\cal C$ and $\cal O$ is
not clear, or when proving results that may be re-used in future
applications that may call for extending $\cal C$ or $\cal O$, it is
better to err on the side of caution, and use equivalences that satisfy
(\ref{respect}) and (\ref{congruence}) but need not be fully abstract;
instead the finest equivalence $\equiv_{\it fine}$ for which a
result $p \equiv_{\it fine} q$ can be proved is often preferable,
because this immediately entails that $p \equiv q$ for any coarser
equivalence relation $\equiv$, in particular for an $\equiv$ that may
turn out to be fully abstract for some future choice of $\cal C$ and
$\cal O$. It is for this reason that much actual verification work
employs the finest equivalences that lend themselves for verification
purposes, such as the various variants of bisimulation equivalence
\cite{Mi90ccs}, see e.g.\ \cite{AG08}.
Nevertheless, this paper is devoted to the characterisation of fully
abstract equivalences, and preorders, for a few suitable choices of
$\cal C$ and $\cal O$.

The programme for refinement preorders proceeds along the same lines,
but here it is important to distinguish between good and bad
properties of processes. The counterpart of (\ref{respect}) is
\begin{equation}\label{respect preo}
p \sqsubseteq q ~~\Rightarrow~~ \forall \phi\in {\cal G}.\; (p \models \phi
\Rightarrow q \models \phi)
\end{equation}
where $\cal G$ is the set of \emph{good} properties within $\cal C$,
those that for some applications may be required of a process. If this
holds, $\sqsubseteq$ \emph{respects} or \emph{preserves} the properties
in $\cal G$. When dealing with \emph{bad} properties, those that in
some applications should be avoided, the implication between $p\models
\phi$ and $q\models \phi$ is oriented in the other direction. Since
every bad property $\phi$ can be reformulated as a good property $\neg
\phi$, there is no specific need add a variant of (\ref{respect preo})
for the bad properties.  The counterpart of (\ref{congruence}) is simply
\begin{equation}\label{precongruence}
p \sqsubseteq q ~~\Rightarrow~~ C[p] \sqsubseteq C[q].
\end{equation}
and a preorder $\sqsubseteq$ that satisfies this requirement is called
a \emph{precongruence} for $\cal O$. Now the preorder that is \emph{fully
abstract} w.r.t.\ $\cal G$ and $\cal O$ always exists, and is characterised by
$$p \sqsubseteq q ~~\Leftrightarrow~~ \forall {\cal O}\mbox{-context } C[\_\!\_].\;
\forall\phi\in {\cal G}.\;  (C[p] \models \phi \Rightarrow C[q] \models \phi).$$
It is the coarsest precongruence for $\cal O$ that respects the
properties in $\cal G$.
A characterisation of the preorder $\sqsubseteq$ that is fully
abstract w.r.t.\ a certain $\cal G$ and $\cal O$ automatically yields
a characterisation of the equivalence $\equiv$ that is fully
abstract w.r.t.\ $\cal G$ and $\cal O$, as one has
$p \equiv q$ iff both $p \sqsubseteq q \wedge q \sqsubseteq p$.

In this paper I will propose three main candidates for the set $\cal
G$ of good properties: \emph{safety properties} in \sect{safety},
\emph{liveness properties} in \sect{liveness} and \emph{conditional
  liveness properties} in \sect{conditional}. For the sake of
theoretical completeness I moreover address general \emph{linear time
  properties} in \sect{LT}.

In \sect{LTS} I will define my model of labelled transition systems and
propose a class of $\cal C$ of operators that appear useful in
applications to combine processes. My favourite selection contains
\begin{itemise}
\item the \emph{partially synchronous interleaving operator} of CSP \cite{Ros97},
\item \emph{abstraction} or \emph{concealment} \cite{BK85,Ros97}
\item and the \emph{state operator} \cite{BB88},
\end{itemise}
or any other basis that is equally expressive. With each of the four
choices for $\cal G$ this set of operators determines a fully abstract
preorder, which will be characterised in
Sections~\ref{sec-safety},~\ref{sec-liveness},~\ref{sec-conditional}
and~\ref{sec-LT}.  It turns out that the resulting preorders are
somewhat robust under the precise choice of operators for which one
imposes a precongruence requirement: the same ones are obtained
already without using concealment, and using merely injective renaming
instead of the more general state operator.
In the other direction, I could just as well have used all operators of CSP\@.

\section{Labelled Transition Systems and a Selection of Composition Operators}
\label{sec-LTS}

Let $\Sigma^*$ denote the set of finite sequences over a given set
$\Sigma$, and $\Sigma^\infty$ the set of infinite ones;
$\Sigma^\omega := \Sigma^*\cup\Sigma^\infty$.  Write $\epsilon$ for
the empty sequence, $\sigma\rho$ for the concatenation of sequences
$\sigma\in\Sigma^*$ and $\rho\in\Sigma^\omega$, and $a$ for the
sequence consisting of the single symbol $a \in \Sigma$.  Write
$\sigma\leq\rho$ for ``$\sigma$ is a prefix of~$\rho$'', i.e.\
``$\rho=\sigma \vee \exists \nu\in \Sigma^*.\sigma\nu=\rho$'', and
$\rho<\sigma$ for ``$\sigma\leq\rho$ and $\sigma\neq\rho$''.

I presuppose an infinite action alphabet $\Act$, not containing the
\emph{silent} action $\tau$, and set $\Act_\tau = \Act \cup \{\tau\}$.

\begin{definition}{LTS}
A \emph{labelled transition system} (LTS) is a pair $(\mathbb{P},\rightarrow)$,
where $\mathbb{P}$ is a class of \emph{processes} or \emph{states} and
${\rightarrow} \subseteq \mathbb{P} \times \Act_\tau \times \mathbb{P}$
is a set of \emph{transitions}, such that for each $p \in \mathbb{P}$
and $\alpha \in \Act_\tau$ the class $\{q \in \mathbb{P} \mid
(p,\alpha,q) \in \mathord\rightarrow\}$ is a set.
\end{definition}
Assuming a fixed transition system $(\mathbb{P},\rightarrow)$, I write
$p \goto \alpha q$ for $(p,\alpha,q) \in {\rightarrow}$; this means
that process $p$ can evolve into process $q$, while performing the action $\alpha$.
The ternary relation $\mathord{\dto\_} \subseteq \mathbb{P} \times \Act^*
\times \mathbb{P}$ is the least relation satisfying\\[2pt]
\mbox{}
\hfill
  $p \dto \epsilon p$\enskip,
\hfill
 $\infr{p \goto \tau q}{p \dto \epsilon q}$\enskip,
\hfill
 $\infr{p \goto a q,~ a \not= \tau}{p \dto {a} q}$
\hfill and \hfill
 $\infr{p \dto \sigma q \dto \rho r}{p \dto {\sigma\rho} r}$
\enskip.
\hfill\mbox{}\\[2pt]
This enables a formalisation of the concepts of traces
and complete traces from the introduction.
\begin{definition}{traces}
Let $p\in \mathbb{P}$.
\begin{itemise}
\item $p$ is \emph{deterministic} if, for any $\sigma\in\Act^*$,
  $p\dto{\sigma} q_1$ and $p\dto{\sigma} q_2$ implies that $q_1=q_2$
  and $q_1\gonotto{\tau} r$.
\item $p$ \emph{deadlocks}, notation $p\not\rightarrow$, if there are no
  $\alpha\in\Act_\tau$ and $q\in\mathbb{P}$ such that $p \goto{\alpha} q$.
\item $p$ is \emph{locked} if it can never do a visible action, i.e.\
  if $p \dto{a} q$ for no $a\in\Act$ and $q\in\mathbb{P}$.
\item $p$ \emph{diverges}, notation $p\Uparrow$, if there are
  $p_i\in\mathbb{P}$ for all $i>0$ such that
  $p\goto{\tau}p_1\goto{\tau}p_2\goto{\tau} \cdots$.
\item $a_1 a_2 a_3 \cdots \in \Act^\infty$ is an \emph{infinite trace}
  of $p$ if there are $p_1,p_2,\ldots \in\mathbb{P}$ such that
  $p\dto{a_1}p_1\dto{a_2}p_2\dto{a_3} \cdots$.
\item $\infinite(p)$ denotes the set of infinite traces of $p$.
\item $\ptr(p) := \{\sigma \in \Act^* \mid \exists q.\; p \dto \sigma q\}$
      is the set of \emph{partial traces} of $p$.
\item $\traces(p):=\infinite(p)\cup\ptr(p)$
      is the set of \emph{traces} of $p$.
\item $\deadlocks(p):= \{\sigma \in \Act^* \mid \exists q.\; p \dto
      \sigma q \not\rightarrow\}$ is the set of \emph{deadlock traces} of $p$.
\item $\diverg(p):= \{\sigma \in \Act^* \mid \exists q.\; p \dto
      \sigma q \Uparrow\}$ is the set of \emph{divergence traces} of $p$.
\item $\ct(p):=\infinite(p)\cup\diverg(p)\cup\deadlocks(p)$
      is the set of \emph{complete traces} of $p$.
\end{itemise}
\end{definition}
Note that $\traces(p) = \{\sigma \in \Act^\omega \mid \exists
\rho\in\ct(p).\; \sigma \leq \rho\}$.

To justify that $\ct(p)$ is indeed a correct formalisation of the set
of complete traces of $p$, I postulate that in a neutral environment,
if a process $p\in \mathbb{P}$ has any outgoing transition
$p\goto{\alpha}q$, then within a finite amount of time it will do one
its outgoing transitions. This is called a \emph{progress property};
it says that a process will continue to make progress if possible.

As explained in the introduction, whether a fully abstract equivalence
identifies processes $p$ and $q$ may depend on the existence of a
third process $r$ such that $p\|r$ can be distinguished from $q\|r$.
When restricting attention to a particular labelled transition system
$(\mathbb{P},\rightarrow)$ it might happen that a perfectly reasonable
candidate $r$ happens not to be a member of $\mathbb{P}$, and thus
that the conclusion $p\equiv q$ is arrived at solely as a result
underpopulation of $\mathbb{P}$. To obtain the most robust notions
of equivalence, I therefore assume my LTS to be \emph{universal}, in
the sense that up to isomorphism it contains \emph{any} process one
can imagine.

\begin{definition}{universal}
An LTS $(\mathbb{P},\rightarrow_{\mbox{\bbbs P}})$ is \emph{universal}
if for any other LTS $(\mathbb{Q},\rightarrow_{\mbox{\bbbs Q}})$ there
exists an injective mapping \mbox{$f:\mathbb{Q}\rightarrow\mathbb{P}$},
called an \emph{embedding}, such that, for any $q\in\mathbb{Q}$ and
$p'\in\mathbb{P}$ one has
$f(q)\goto{\alpha}_{\mbox{\bbbs P}}p'$ iff $p'\in\mathbb{P}$ has the
form $f(q')$ for some $q'\in Q$ with $q \goto{\alpha}_{\mbox{\bbbs Q}} q'$.
\end{definition}
The existence of a universal LTS has been established in \cite{vG01}.
Here one needs $\mathbb{P}$ to be a proper class.
All preorders $\sqsubseteq$ that I consider in this paper are defined on
arbitrary LTSs and have the property that $q\sqsubseteq q'
\Leftrightarrow f(p)\sqsubseteq f(q')$, for any embedding $f$.
This means that they are precongruences for isomorphism, and only take
into account the future behaviour of processes, i.e.\ in determining
whether $p \sqsubseteq q$ transitions leading to $p$ or $q$ play no r\^ole.
Thus, a definition of such a preorder on a universal LTS, implicitly
also defines it on any other LTS\@.

\begin{table}
\begin{center}
$\begin{array}{ccc}
\infr{p \goto{\alpha} p'}{p\|^{}_S q \goto{\alpha} p'\|^{}_S q}
\;{\scriptstyle (\alpha\not\in S)} &
\infr{q \goto{\alpha} q'}{p\|^{}_S q \goto{\alpha} p\|^{}_S q'}
\;{\scriptstyle (\alpha\not\in S)} &
\infr{p \goto{a} p' \qquad q \goto{a} q'}{p\|^{}_S q \goto{a} p\|^{}_S q'}
\;{\scriptstyle (a\in S)}
\end{array}$\\[1em]
$\begin{array}{cc}
\infr{p \goto{\alpha} p'}{\tau^{}_I(p) \goto{\alpha} \tau^{}_I(p')}
\;{\scriptstyle (\alpha\not\in I)} &
\infr{p \goto{a} p'}{\tau^{}_I(p) \goto{\tau} \tau^{}_I(p')}
\;{\scriptstyle (a\in I)}
\\[2em]
\infr{p \goto{\tau} p'}{\lambda^m_s(p) \goto{\tau} \lambda^m_{s}(p')} &
\infr{p \goto{a} p'}{\lambda^m_s(p)
  -\!\!\!-\!\!\!\!\goto{\!\!\!\!\!\!\!\!\!\!\!\!\!\!\!a(m,s)}
  \lambda^m_{s(m,a)}(p')}
\end{array}$
\caption{\it Partially synchronous interleaving, abstraction, and the
  state operator}
\label{sos}
\end{center}
\end{table}

I will now do a proposal for the set $\cal O$ that will be my default
choice in this paper. It consists of three operators for combining
processes that appear useful in practical applications.

The first is the \emph{partially synchronous interleaving operator} of
CSP \cite{Ros97}.  It is parametrised with a set $S\subseteq\Act$ of
visible actions on which it synchronises: the composition $p\|^{}_S q$
can perform an action from $S$ only when both $p$ and $q$ perform
it. All other actions from $p$ and $q$ are interleaved: whenever one
of the two components can perform such an action, so can the
composition, while the other component doesn't change its state.
Formally, for any choice of $S\subseteq \Act$,~
$\|^{}_S:\mathbb{P}\times\mathbb{P}\rightarrow\mathbb{P}$ is a binary
operator on \mathbb{P} such that a process $p\|^{}_S q$ can make an
$\alpha$-transition iff this can be inferred by the first three rules
of Table~\ref{sos} from the transitions that $p$ and $q$ can make.
Here $a$ ranges over $\Act$ and $\alpha$ over $\Act_\tau$.

A context $\_\!\_\|^{}_S r$ is widely regarded as a plausible way of
modelling an environment that partially synchronises with processes
under investigation. It is for this reason I include it in $\cal O$.
This argument does not hold for many other process algebraic
operators, such as the choice operator + of CCS \cite{Mi90ccs}. This
is an example of an operator that is useful for \emph{describing}
particular processes, but a context $\_\!\_ + r$ does not really
model a reasonable environment in which one wants to run processes
under investigation. For reasons of algebraic convenience, being a
precongruence for the + is an optional desideratum of refinement
preorders, but it is not such an overriding requirement as being a
precongruence for $\|^{}_S$.

The second operator nominated for membership of $\cal O$ is the unary
\emph{abstraction operator} $\tau^{}_I$ of ACP$_\tau$ \cite{BK85}, also known
as the \emph{concealment} operator of CSP \cite{BHR84,Ros97}. This
operator models a change in the level of abstraction at which
processes are regarded, by reclassifying visible actions as hidden
ones. It is parametrised with the set $I\subseteq\Act$ of visible
actions that one chooses to abstract from, and formally defined by the
next two rules of Table~\ref{sos}.  Abstraction from internal actions
by such a mechanism is an essential part of most work on verification
in a process algebraic setting, and a context $\tau^{}_I(\_\!\_)$
represents a reasonable environment in which to evaluate processes.

My final nominee for the set $\cal O$ of useful composition operators
is the \emph{state operator} $\lambda^m_s$ of \cite{BB88}.  This unary
operator formalises an interface between a process and its environment
that is able to rename actions: if its argument process performs an
action $b$, the interface $\lambda^m_s(\_\!\_)$ may pass on this
action to the environment as $c$, thereby opening up the possibility
of synchronisation with another occurrence of $c$ when using a
composed context $\lambda^m_s(\_\!\_ )\|^{}_S r$. Furthermore, the
interface may remember the actions that have been performed to far,
and make its renaming behaviour dependent on this history. For
instance, if its argument $p$ performs two $a$-actions in a row,
$\lambda^m_s(p)$ may pass these on to the environment as $a_1$ and
$a_2$, respectively.

The state operator $\lambda^m_s$ is parametrised with an
\emph{interface specification} $m=({\cal S},\act,\eff)$, consisting of
set $\cal S$ of \emph{internal states}, and functions $\act:{\cal
S}\times\Act \rightarrow \Act$ and $\eff:{\cal S}\times\Act
\rightarrow \cal S$, as well as a \emph{current state} $s\in\cal S$.
Here $\act$ is a function that renames actions performed by an
argument process $p$ into actions performed by the interface
$\lambda^m_s(p)$; the renaming depends on the internal state of the
state operator, and thus is of type ${\cal S}\times\Act \rightarrow
\Act$. $\eff$ specifies the transformation of one internal state of
the state operator into another, as triggered by the the encounter of
an action of its argument process; it thus is of type ${\cal
S}\times\Act \rightarrow \cal S$. Traditionally, one writes $a(m,s)$
for $\act(s,a)$ and $s(m,a)$ for $\eff(s,a)$. So $a(m,s)$ denotes the
action $a$, as modified by the interface $m$ in state $s$, whereas
$s(m,a)$ denotes the internal state $s$, as modified by the occurrence
of action $a$ of the argument process within the scope of the
interface $m$.  With this notation, the formal definition of the state
operator is given by the last two rules of Table~\ref{sos}.

The special case of a state operator with a singleton set of internal
states is known as a \emph{renaming operator}.  Renaming operators
occur in the languages CCS \cite{Mi90ccs} and CSP \cite{BHR84,Ros97}.
Here I denote a renaming operator as $\lambda^m$, where the redundant
subscript $s$ is omitted, and $m$ trivialises to a function
\mbox{$\act:\Act\rightarrow\Act$}. I speak of an \emph{injective}
renaming operator if $\act(a)=\act(b)$ implies $a=b$. For any
injective renaming operator $\lambda^m$ there exists an inverse
renaming operator $\lambda^{-m}$ (not necessarily injective) such that
for all $p\in\mathbb{P}$, the process $\lambda^{-m}(\lambda^m(p))$
behaves exactly the same as $p$---they are equivalent under all
notions of equivalence considered in this paper.

\section{Safety Properties}\label{sec-safety}

A \emph{safety property} \cite{Lam77} is a property that says that
\begin{quote}\it
something bad will never happen.
\end{quote}
To formulate a canonical safety property, assume that my alphabet
of visible actions contains one specific action $b$, whose occurrence
is \emph{bad}. The canonical safety property now says that
\textbf{$b$ will never happen}.
\begin{definition}{canonical safety}
A process $p$ satisfies the \emph{canonical safety property}, notation
$p\models \textit{safety}(b)$, if no trace of $p$ contains the action $b$.
\end{definition}
To arrive at a general concept of safety property for labelled
transition systems, assume that some notion of \emph{bad} is defined.
Now, to judge whether a process $p$ satisfies this safety property,
one should judge whether $p$ can reach a state in which one would say
that something bad had happened. But all observable behaviour of $p$
that is recorded in a labelled transition system until one comes to such
a verdict, is the sequence of visible actions performed until that
point. Thus the safety property is completely determined by the set
sequences of visible actions that, when performed by $p$, lead to
such a judgement.  Therefore one can just as well define the concept of
a safety property in terms of such a set.
\begin{definition}{safety}
A \emph{safety property} of processes in an LTS is given by a set
$B\subseteq \Act^*$.
A process $p$ \emph{satisfies} this safety property, notation
$p\models \textit{safety}(B)$, when $\ptr(p)\cap B=\emptyset$.
\end{definition}
This formalisation of safety properties is essentially the same as the
one in \cite{AS85} and all subsequent work on safety properties; the
only, non-essential, difference is that I work with transition systems
in which the transitions are labelled, whereas \cite{AS85} and most
related work deals with state-labelled transition systems.


A property is called \emph{trivial} if it either always holds or
always fails. Trivial properties are respected by any equivalence.
The sets $B:=\emptyset$ and $B:=\{\epsilon\}$ specify trivial safety
properties.

\begin{theorem}{canonical safety}
A precongruence for the state operator respects every safety property iff
it respects the canonical safety property.
\end{theorem}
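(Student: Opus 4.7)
The forward implication is immediate, so the real content lies in the converse. The plan is to reduce an arbitrary safety property $\textit{safety}(B)$ to the canonical one by embedding the process in a state-operator context that emits the bad action $b$ precisely when its argument first completes a partial trace in $B$.

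Concretely, since $\Act$ is infinite I can pick some $c\in\Act\setminus\{b\}$ and define an interface specification $m=(\mathcal{S},\act,\eff)$ with $\mathcal{S} := \Act^*$ (so the internal state records the entire history of visible actions), initial state $\epsilon$, $\eff(\sigma,a) := \sigma a$, and $\act(\sigma,a) := b$ if $\sigma a \in B$ while $\act(\sigma,a) := c$ otherwise. By the last two SOS rules of Table~\ref{sos}, a visible trace $a_1 a_2 \cdots a_n$ of $p$ induces a trace $\hat a_1 \hat a_2 \cdots \hat a_n$ of $\lambda^m_\epsilon(p)$ in which $\hat a_i = b$ exactly when $a_1 \cdots a_i \in B$ (noting that $\tau$-transitions leave the internal state untouched, so the bookkeeping is entirely over visible actions). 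Since $\ptr(p)$ is prefix-closed, this at once yields
$$\lambda^m_\epsilon(p) \models \textit{safety}(b) \;\Leftrightarrow\; p \models \textit{safety}(B).$$

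To finish: assume $p\sqsubseteq q$ and $p\models\textit{safety}(B)$. The precongruence hypothesis for the state operator gives $\lambda^m_\epsilon(p)\sqsubseteq\lambda^m_\epsilon(q)$; the assumption that $\sqsubseteq$ respects $\textit{safety}(b)$, applied via (\ref{respect preo}), then yields $\lambda^m_\epsilon(q)\models\textit{safety}(b)$; and one final appeal to the equivalence above delivers $q\models\textit{safety}(B)$. Hence $\sqsubseteq$ respects $\textit{safety}(B)$. The only point calling for any thought is the design of $m$ so that a single state operator can test membership in an arbitrary (possibly infinite) $B$; taking the internal state to be the whole action history makes this painless, and the remaining verification is a mechanical traversal of the SOS rules for $\lambda^m_s$.
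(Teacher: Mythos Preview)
Your proof is correct and follows essentially the same approach as the paper: both use a state operator whose internal state space is $\Act^*$ with $\eff(\sigma,a)=\sigma a$, renaming each visible action to $b$ or a fixed neutral action according to whether the accumulated history lands in $B$, and then conclude via the precongruence hypothesis and respect for $\textit{safety}(b)$. The only cosmetic differences are your choice of $c$ rather than the paper's $d$ for the neutral action and your explicit remark about prefix-closure of $\ptr(p)$, which the paper leaves implicit.
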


\begin{proof}
``\emph{Only if}'' follows because the canonical safety property is in
  fact a safety property, namely the one with $B$ being the set of
  those sequences that contain the action $b$.

\noindent
``\emph{If}'':
I use here a state operator that remembers exactly what sequence of
actions has occurred so far. Thus the set of internal states of its
interface specification $m$ is $\Act^*$, and furthermore
$\sigma(m,a):=\sigma a$ for all $\sigma\in \Act^*$ and $a\in\Act$.
Now given a safety property $B\subseteq\Act^*$,
let $b\in\Act$ be the special ``bad'' action, and $d\in\Act$ be a
different ``neutral'' action. Define
$a(m,\sigma):=\left\{\begin{array}{ll}b&\mbox{if}~\sigma a \in B\\
d&\mbox{otherwise.}\end{array}\right.$\\
Then $\lambda^m_\epsilon(p)\models\textit{safety}(b)$ iff
$p\models \textit{safety}(B)$. Thus, if $p \sqsubseteq q$ and
$p\models \textit{safety}(B)$, then $\lambda^m_\epsilon(p) \sqsubseteq
\lambda^m_\epsilon(q)$ and $\lambda^m_\epsilon(p)\models \textit{safety}(b)$.
Hence $\lambda^m_\epsilon(q)\models \textit{safety}(b)$, so
$q\models \textit{safety}(B)$.
\end{proof}

Being locked (see \df{traces}) is a safely property, namely with $B$
the set of all sequences over $\Act^*$ of length 1. It can be understood
this way by regarding any occurrence of an action as bad.

\begin{theorem}{locked}
A precongruence for abstraction that respects the property of being locked,
respects the canonical safety property.
\end{theorem}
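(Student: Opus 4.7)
The plan is to use the abstraction operator to reduce the canonical safety property to the property of being locked. I would set $I := \Act \setminus \{b\}$, so that $\tau^{}_I$ hides every visible action except the bad action $b$. The crux is the following observation: for every process $p$,
\[ \tau^{}_I(p)\text{ is locked} \quad \Longleftrightarrow \quad p \models \textit{safety}(b). \]
Since $\tau^{}_I$ relabels every action in $I$ as $\tau$, the only visible action $\tau^{}_I(p)$ can possibly perform is $b$; hence $\tau^{}_I(p)$ is locked iff there is no $q'$ with $\tau^{}_I(p) \dto{b} q'$. Unfolding the SOS rules for $\tau^{}_I$ in Table~\ref{sos}, such a $q'$ exists iff $p$ has a partial trace of the form $\sigma b$ with $\sigma \in I^*$; by truncating any partial trace at its first occurrence of $b$, this is equivalent to saying that some partial trace of $p$ contains $b$, i.e.\ $p \not\models \textit{safety}(b)$.

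Once this equivalence is in hand, the proposition follows in a few lines. Suppose $p \sqsubseteq q$ and $p \models \textit{safety}(b)$. Precongruence for abstraction gives $\tau^{}_I(p) \sqsubseteq \tau^{}_I(q)$; by the observation, $\tau^{}_I(p)$ is locked; since $\sqsubseteq$ respects the property of being locked, $\tau^{}_I(q)$ is also locked; and applying the observation once more yields $q \models \textit{safety}(b)$.

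The only substantive step is the key equivalence above, and even there the work amounts to a routine unfolding of the weak transition relation through the abstraction rules. I do not anticipate any genuine conceptual obstacle; structurally the argument parallels the proof of \thm{canonical safety}, with the abstraction operator $\tau^{}_I$ playing here the role that the history-tracking state operator played there.
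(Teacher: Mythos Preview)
Your proposal is correct and follows essentially the same route as the paper: set $I:=\Act\setminus\{b\}$, observe that $\tau^{}_I(p)$ is locked iff $p\models\textit{safety}(b)$ (since $b$ is the only visible action $\tau^{}_I(p)$ can perform), and then chain precongruence with respect for lockedness. The paper inserts the intermediate equivalence $p\models\textit{safety}(b)\Leftrightarrow\tau^{}_I(p)\models\textit{safety}(b)$, but this is a purely presentational variant of your key observation.
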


\begin{proof}
Let $\sqsubseteq$ be a precongruence for abstraction that respects the
property of being locked, and suppose that $p\sqsubseteq q$. Let
$I:=\Act\setminus\{b\}$. Then $\tau_I$ is an operator that renames all actions
other than $b$ into $\tau$; thus if a process of the form $\tau_I(r)$ ever
performs a visible action, it must be $b$.  Now $p\models\emph{safety}(b)
\Leftrightarrow\linebreak \tau_I(p)\models\emph{safety}(b) \Leftrightarrow
\tau_I(p)\mbox{ is locked} \Rightarrow \tau_I(q)\mbox{ is locked}
\Leftrightarrow \tau_I(q)\models\emph{safety}(b) \Leftrightarrow
q\models\emph{safety}(b)$.
\end{proof}
By combining Theorems~\ref{thm-canonical safety} and~\ref{thm-locked} one obtains:

\begin{corollary}{locked}
A precongruence for abstraction and for the state operator that respects the
property of being locked, respects all safety properties.
\end{corollary}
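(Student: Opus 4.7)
The plan is essentially to chain the two preceding results, since the hypotheses of the corollary contain precisely the hypotheses needed for both \thm{locked} and \thm{canonical safety}. Let $\sqsubseteq$ be a precongruence for both abstraction and the state operator that respects the property of being locked.

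First I would invoke \thm{locked}: since $\sqsubseteq$ is a precongruence for abstraction and respects the property of being locked, it respects the canonical safety property $\textit{safety}(b)$. No extra work is needed here; the hypothesis about the state operator is irrelevant for this step, and \thm{locked} applies verbatim.

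Next I would invoke \thm{canonical safety}: the preorder $\sqsubseteq$ is a precongruence for the state operator, and by the previous step it respects the canonical safety property. The ``if''-direction of \thm{canonical safety} then gives that $\sqsubseteq$ respects every safety property $\textit{safety}(B)$ with $B\subseteq\Act^*$.

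There is no real obstacle here, as the work has been done in the two theorems being combined. The only thing to check is that the hypotheses line up, which they do: \thm{locked} requires precongruence for abstraction plus preservation of being locked, and \thm{canonical safety} requires precongruence for the state operator plus preservation of the canonical safety property (which we have just derived). Thus the proof is just a two-line composition, with no additional construction or calculation required.
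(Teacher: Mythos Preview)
Your proposal is correct and matches the paper's own reasoning exactly: the corollary is stated immediately after \thm{canonical safety} and \thm{locked} with the remark that it is obtained by combining them, with no further argument given. The two-step chaining you describe is precisely what is intended.
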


\begin{theorem}{safety}
Any precongruence for $\cal O$ that respects a single nontrivial safety
property, respects every safety property.
\end{theorem}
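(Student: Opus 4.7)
The plan is to reduce the claim to \cor{locked}: that corollary upgrades ``respecting the locked property'' to ``respecting all safety properties'' for any precongruence for abstraction and the state operator, both of which lie in $\cal O$. So it suffices to show that any $\cal O$-precongruence $\sqsubseteq$ respecting a single nontrivial $\textit{safety}(B)$ also respects the property of being locked.

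Nontriviality of $\textit{safety}(B)$ forces $B\neq\emptyset$ (else it always holds) and $\epsilon\notin B$ (else it always fails, as $\epsilon\in\ptr(p)$ for every $p$). Pick $\sigma=s_1s_2\cdots s_n\in B$ with $n\geq 1$, and choose an action $a_0\in\Act\setminus\{s_1,\ldots,s_n\}$, which exists since $\Act$ is infinite.

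The heart of the argument is to build an $\cal O$-context $C[\_\!\_]$ such that $C[p]\models\textit{safety}(B)$ iff $p$ is locked. Let $\lambda^m$ be the singleton-state renaming operator that maps every action to $a_0$, and let $r$ be the deterministic process with unique maximal transition sequence $r\goto{a_0}r_0\goto{s_1}r_1\goto{s_2}\cdots\goto{s_n}r_n$ (available in the universal LTS). Define
$$C[p]\;:=\;\tau^{}_{\{a_0\}}\bigl(\lambda^m(p)\,\|^{}_{\{a_0\}}\,r\bigr).$$
If $p$ is locked, $\lambda^m(p)$ has no visible transition; synchronisation on $\{a_0\}$ thus prevents $r$ from ever moving, and $\ptr(C[p])=\{\epsilon\}$, which is disjoint from $B$. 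If $p$ is not locked, then $\lambda^m(p)\dto{a_0}p'$ for some $p'$, after which $r$ has advanced to $r_0$ and produces $s_1\cdots s_n$ autonomously (each $s_i\notin\{a_0\}$); hiding $a_0$ then yields $\sigma\in\ptr(C[p])$, so $\textit{safety}(B)$ fails.

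With this $C$ in hand, the conclusion is immediate: if $p\sqsubseteq q$ and $p$ is locked, then $C[p]\sqsubseteq C[q]$ by $\cal O$-precongruence, $C[p]\models\textit{safety}(B)$ by the construction, hence $C[q]\models\textit{safety}(B)$ since $\sqsubseteq$ respects this property, and so $q$ is locked by the construction applied in the other direction. Then \cor{locked} finishes the proof. The main obstacle is designing $C$, since $B$ is arbitrary and need not have any structure beyond containing one nonempty word: no single operator in $\cal O$ suffices, but the three cooperate---the state operator funnels the first visible action of $p$ into a common trigger $a_0$, synchronous interleaving uses this trigger to release a witness $r$ that plays out $\sigma$, and abstraction erases the trigger so that $\sigma\in B$ itself materialises as a partial trace of $C[p]$.
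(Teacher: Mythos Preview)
Your argument is correct. The construction does exactly what you claim: when $p$ is locked the composite has only the empty partial trace, and when $p$ is not locked the trigger $a_0$ fires once, releases $r$, and after hiding $a_0$ the word $\sigma\in B$ appears as a partial trace. The appeal to \cor{locked} then closes the argument.

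The paper takes a related but different route. Instead of reducing to the \emph{locked} property, it reduces to the \emph{canonical safety property} $\textit{safety}(a)$ and then invokes \thm{canonical safety}. Its context is $(\tau^{}_I(s)\|^{}_\emptyset r_\sigma)\|^{}_\Act r_{\sigma a}$ with $I=\Act\setminus\{a\}$: abstraction collapses all of $s$'s visible behaviour to occurrences of $a$, a free interleave with a process doing $\sigma$ supplies the needed prefix, and a full synchronisation with a process doing $\sigma a$ filters the traces down to prefixes of $\sigma a$. For this to work the paper must pick $\sigma a\in B$ so that no prefix $\rho\leq\sigma$ lies in $B$ (otherwise the context would violate $\textit{safety}(B)$ regardless of $s$); your construction avoids this minimality step because when $p$ is locked your context has \emph{no} nonempty partial traces at all. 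Conversely, the paper's reduction step itself uses only $\tau^{}_I$ and $\|^{}_S$, deferring the state operator entirely to the invocation of \thm{canonical safety}, whereas your reduction already uses a (non-injective) renaming. Both arguments ultimately exercise all three operators in $\cal O$; they just distribute the work differently.
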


\begin{proof}
Let $\sqsubseteq$ be a precongruence for $\cal O$ that respects
$\emph{safety}(B)$, where $B \subseteq \Act^*$, $B\neq\emptyset$ and
$\epsilon\not\in B$. Let $\sigma\in\Act^*$ and $a\in\Act$ be such that
$\sigma a \in B$, and no prefix $\rho\leq\sigma$ of $\sigma$ is in $B$.
Let $\emph{safety}(a)$ be the canonical safety property, but with $a$
playing the role of $b$. Naturally, \thm{canonical safety} holds for
this renamed canonical safety property as well. Hence it suffices to
show that $\sqsubseteq$ respects the property $\emph{safety}(a)$.
Let $I:=\Act\setminus\{a\}$. Then $\tau^{}_I$ is an operator that
renames all actions other than $a$ into $\tau$; thus if a process of
the form $\tau^{}_I(s)$ ever performs a visible action, it must be $a$.
Let $r_\sigma$ be a process with $\ct(r)=\{\sigma\}$ and $r_{\sigma a}$ be
a process with $\ct(r)=\{\sigma a\}$. Then, for any choice of $s\in\mathbb{P}$,
$(\tau^{}_I(s)\|^{}_\emptyset r_\sigma)\|^{}_\Act r_{\sigma a}$ is a process  all
of whose traces are prefixes of $\sigma a$, with $\sigma a
\in\ptr((\tau^{}_I (s)\|^{}_\emptyset r_\sigma)\|^{}_\Act r_{\sigma a})$
iff $a\in\ptr(\tau^{}_I(s))$, which is the case iff $s
\not\models\emph{safety}(a)$. Suppose $p\sqsubseteq q$.  Then
$(\tau^{}_I(p)\|^{}_\emptyset r_\sigma)\|^{}_\Act r_{\sigma a}
\sqsubseteq(\tau^{}_I(q)\|^{}_\emptyset r_\sigma)\|^{}_\Act r_{\sigma a}$ and
\\[1ex]
\mbox{}\hfill$p\models\emph{safety}(a) \Leftrightarrow (\tau^{}_I(p)\|^{}_\emptyset
r_\sigma)\|^{}_\Act r_{\sigma a} \models \emph{safety}(B) \Rightarrow
(\tau^{}_I(q)\|^{}_\emptyset r_\sigma)\|^{}_\Act r_{\sigma a} \models
\emph{safety}(B) \Leftrightarrow q\models\emph{safety}(a)$.
\end{proof}
Let $\sqsubseteq_{\it safety}$ denote the preorder that is fully
abstract w.r.t.\ the class of safety properties and $\cal O$.
The following, well-known, theorem characterises this preorder as
\emph{reverse partial trace inclusion}.

\begin{theorem}{safety characterisation}
$p \sqsubseteq_{\it safety} q \Leftrightarrow \ptr(p) \supseteq  \ptr(q)$.
\end{theorem}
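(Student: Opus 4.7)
My plan is to establish both implications by appealing to the general characterisation of the fully abstract preorder given in the introduction, namely $p \sqsubseteq_{\it safety} q$ iff $C[p] \models \phi \Rightarrow C[q] \models \phi$ for every $\cal O$-context $C$ and every safety property $\phi$.

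For the ``$\Leftarrow$'' direction I would argue that reverse partial trace inclusion is \emph{itself} a precongruence for $\cal O$ that respects every safety property; since $\sqsubseteq_{\it safety}$ is the coarsest such preorder, the inclusion $\ptr(p) \supseteq \ptr(q) \Rightarrow p \sqsubseteq_{\it safety} q$ follows. Respect for safety is immediate: if $\ptr(p)\cap B = \emptyset$ and $\ptr(p)\supseteq\ptr(q)$, then $\ptr(q)\cap B=\emptyset$. The precongruence claim reduces to showing that for each of the three operators in $\cal O$, the partial traces of the composite are determined (monotonically) by the partial traces of the arguments. Reading this off the rules of Table~\ref{sos}: for $\lambda^m_s$ the trace $a_1\cdots a_n$ of $p$ is mapped deterministically to $a_1(m,s_0)\,a_2(m,s_1)\cdots$ with $s_0:=s$ and $s_{i+1}:=s_i(m,a_{i+1})$, independent of how $p$ reaches it; for $\tau^{}_I$ it is mapped to its restriction to $\Act\setminus I$; and for $p\|^{}_S q$ the partial traces are exactly the sequences whose projections onto $(\Act\setminus S)$-moves of $p$ and of $q$, and synchronised $S$-moves, lie in $\ptr(p)$ and $\ptr(q)$ respectively. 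In each case monotonicity in the argument's partial trace set is manifest.

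For the ``$\Rightarrow$'' direction I would argue by contraposition. Suppose $\ptr(p)\not\supseteq \ptr(q)$, and pick $\sigma\in\ptr(q)\setminus\ptr(p)$; note $\sigma\neq\epsilon$ since $\epsilon$ lies in every partial trace set. Take the identity context $C[\_\!\_]:=\_\!\_$ and the safety property $\textit{safety}(B)$ with $B:=\{\sigma\}$. Then $p\models\textit{safety}(B)$ because $\sigma\notin\ptr(p)$, while $q\not\models\textit{safety}(B)$ because $\sigma\in\ptr(q)$. By the characterisation of the fully abstract preorder this shows $p \not\sqsubseteq_{\it safety} q$, as required.

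The only mildly delicate step will be the precongruence check for $\|^{}_S$, since it involves writing down how a trace of the parallel composition decomposes into traces of the components obeying the synchronisation constraint on $S$; but once this decomposition is in hand, monotonicity is immediate and the rest of the argument is routine. Everything else is direct from the definitions and the fully-abstract characterisation quoted from the introduction.
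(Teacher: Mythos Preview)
Your proposal is correct and follows essentially the same route as the paper: for ``$\Leftarrow$'' you show that reverse partial trace inclusion is a safety-respecting precongruence for $\cal O$ (via the same trace-level descriptions of $\|^{}_S$, $\tau^{}_I$ and $\lambda^m_s$), and for ``$\Rightarrow$'' you exploit respect for safety directly in the identity context. The only cosmetic difference is that for ``$\Rightarrow$'' the paper takes $B:=\Act^*\setminus\ptr(p)$ in one shot, whereas you pick a single witness $\sigma$ and use $B:=\{\sigma\}$; both choices work equally well, and your aside that $\sigma\neq\epsilon$ is true but not actually needed for the argument.
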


\begin{proof}
Define reverse partial trace inclusion, $\sqsubseteq_T^{-1}$, by
$p \sqsubseteq_T^{-1} q$ iff $\ptr(p) \supseteq  \ptr(q)$.

\noindent
``$\Leftarrow$'': It suffices to establish that $\sqsubseteq_T^{-1}$
is a precongruence for $\cal O$ that respects all safety properties.

That $\sqsubseteq_T^{-1}$ is a precongruence for $\cal O$ follows
immediately from the following observations:
$$\begin{array}{@{}rcl@{}}
\ptr(p\|^{}_S q) &=& \{\sigma \in \nu\|^{}_S\xi \mid \nu\in\ptr(p)
\wedge \xi\in\ptr(q)\}\\
\ptr(\tau^{}_I(p)) &=& \{\tau^{}_I(\sigma) \mid \sigma \in\ptr(p)\}\\
\ptr(\lambda^m_s(p)) &=& \{\lambda^m_s(\sigma) \mid \sigma \in\ptr(p)\}.
\end{array}$$
Here $\nu\|^{}_S\xi$ denotes the set of sequences of actions for which
is it possible to mark each action occurrence as \emph{left},
\emph{right} or both, obeying the restriction that an occurrence of
action $a$ is marked both \emph{left} and \emph{right} iff $a\in S$,
such that the subsequence of all \emph{left}-labelled action
occurrences is $\nu$ and the subsequence of all \emph{right}-labelled
action occurrences is $\xi$. Furthermore, the operators $\tau^{}_I$ and
$\lambda^m_s$ on $\Act^*$ are uniquely determined by\vspace{-1ex}
$$\begin{array}{r@{~=~}l@{\qquad}r@{~=~}l}
\tau^{}_I(\epsilon)&\epsilon
&
\tau^{}_I(a\sigma) & \left\{\begin{array}{ll}\tau^{}_I(\sigma) &
\mbox{if}~a\in I\\
a\tau^{}_I(\sigma) & \mbox{otherwise}\end{array}\right.
\\
\lambda^m_s(\epsilon) & \epsilon
&
\lambda^m_s(a\sigma) & a(m,s)\lambda^m_{s(m,a)}(\sigma).
\end{array}$$

To show that $\sqsubseteq_T^{-1}$ respects all safety properties, let
$B\subseteq A^*$, $p \sqsubseteq_T^{-1} q$, and suppose $p \models
\emph{safety}(B)$. Then $\ptr(q) \subseteq \ptr(p)$ and $\ptr(p)\cap B
= \emptyset$. Thus $\ptr(q)\cap B = \emptyset$, i.e.\
$q \models \emph{safety}(B)$, which had to be shown.

``$\Rightarrow$'':
Let $\sqsubseteq$ be any precongruence for $\cal O$ that respects all
safety properties, and suppose $p\sqsubseteq q$.\linebreak[2]
I have to establish that $p\sqsubseteq_T^{-1} q$.
Let $B:=\Act^*\setminus\ptr(p)$. Then $p \models \emph{safety}(B)$.
Thus $q \models \emph{safety}(B)$, i.e.\
$\ptr(q)\cap(\Act^*\setminus\ptr(p))=\emptyset$.
This yields $\ptr(q)\subseteq\ptr(p)$.
\end{proof}
The above characterisation as reverse partial trace inclusion of the
coarsest congruence for $\cal O$ that respects all safety properties,
is rather robust under the choice of $\cal O$. It holds already for
the empty class of operators, and it remains true when adding in all
operators of CSP \cite{BHR84}, CCS \cite{Mi90ccs} or ACP$_\tau$
\cite{BK85}, as $\sqsubseteq_T^{-1}$ is known to be a precongruence
for all of them.

By \thm{safety}, the characterisation also remains valid when
requiring respect for one arbitrary safety property only, instead of
all of them, but to this end all three operators of $\cal O$ are
needed. If we just retain the state operator, by \thm{canonical
safety} it suffices to require respect for the canonical safety
property only.

\section{Liveness Properties}\label{sec-liveness}

A \emph{liveness property} \cite{Lam77} is a property that says that
\begin{quote}\it
something good will eventually happen.
\end{quote}
To formulate a canonical liveness property, assume that the alphabet $\Act$
contains one specific action $g$, whose occurrence
is \emph{good}. The canonical liveness property now says that
\textbf{$g$ will eventually happen}.
\begin{definition}{canonical liveness}
A process $p$ satisfies the \emph{canonical liveness property}, notation
$p\models \textit{liveness}(g)$, if every
complete trace of $p$ contains the action $g$.
\end{definition}
To arrive at a general concept of liveness property for labelled
transition systems, assume that some notion of \emph{good} is defined.
Now, to judge whether a process $p$ satisfies this liveness property,
one should judge whether $p$ can reach a state in which one would say
that something good had happened. But all observable behaviour of $p$
that is recorded in a labelled transition system until one comes to such
a verdict, is the sequence of visible actions performed until that
point. Thus the liveness property is completely determined by the set
sequences of visible actions that, when performed by $p$, lead to
such a judgement.  Therefore one can just as well define the concept of
a liveness property in terms of such a set.
\begin{definition}{liveness}
A \emph{liveness property} of processes in an LTS is given by a set
$G\subseteq \Act^*$.
A process $p$ \emph{satisfies} this liveness property, notation
$p\models \textit{liveness}(G)$, when each complete trace of $p$ has a
prefix in $G$.
\end{definition}
This formalisation of liveness properties is essentially different from the
one in \cite{AS85} and most subsequent work on liveness properties;
this point is discussed in \sect{LT}.

Just as for safety properties,
the sets $G:=\emptyset$ and
$G:=\{\epsilon\}$ specify trivial liveness properties.

\begin{theorem}{canonical liveness}
A precongruence for the state operator respects every liveness
property iff it respects the canonical liveness property.
\end{theorem}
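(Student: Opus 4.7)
The plan is to mirror the strategy used for the safety analogue (\thm{canonical safety}), since the state operator is powerful enough to push the ``liveness'' bookkeeping from an arbitrary target set $G\subseteq\Act^*$ down to the single action $g$. The ``only if'' direction is immediate: the canonical liveness property $\textit{liveness}(g)$ is exactly $\textit{liveness}(G_g)$ where $G_g := \{\sigma\in\Act^* \mid \sigma\text{ contains } g\}$, because saying that every complete trace has a prefix in $G_g$ is the same as saying that every complete trace contains $g$.

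For ``if'', assume $\sqsubseteq$ respects $\textit{liveness}(g)$, and let $G\subseteq\Act^*$. If $\epsilon\in G$ the property is trivially satisfied by every process, so I may assume $\epsilon\notin G$. Following the template of \thm{canonical safety}, I take the state operator that records the history: its set of internal states is $\Act^*$ with $\sigma(m,a):=\sigma a$. Fixing a neutral action $d\neq g$, define the action renaming
\[
a(m,\sigma) \;:=\; \begin{cases} g & \text{if } \sigma a\in G \\ d & \text{otherwise.}\end{cases}
\]
Because $\lambda^m_\epsilon$ only renames visible actions and leaves the $\tau$-structure intact, the complete traces of $\lambda^m_\epsilon(p)$ are exactly the action-by-action renamings of the complete traces of $p$; in particular, deadlocks go to deadlocks, divergences to divergences, and infinite traces to infinite traces.

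The key observation is then the equivalence
\[
\lambda^m_\epsilon(p)\models \textit{liveness}(g) \;\Longleftrightarrow\; p\models \textit{liveness}(G).
\]
Indeed, tracking the state of the operator along a complete trace $\rho$ of $p$, the renamed action at position $i\geq 1$ is $g$ precisely when the length-$i$ prefix of $\rho$ lies in $G$; so the renamed trace contains $g$ iff $\rho$ has some non-empty prefix in $G$, which, since $\epsilon\notin G$, is iff $\rho$ has any prefix in $G$. Given this equivalence, the argument closes exactly as for safety: if $p\sqsubseteq q$ and $p\models\textit{liveness}(G)$, then precongruence gives $\lambda^m_\epsilon(p)\sqsubseteq \lambda^m_\epsilon(q)$, and the assumed respect of $\textit{liveness}(g)$ transfers satisfaction from $\lambda^m_\epsilon(p)$ to $\lambda^m_\epsilon(q)$, whence $q\models \textit{liveness}(G)$.

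The only step requiring any care — and the likely main obstacle — is justifying the bijective correspondence between complete traces of $p$ and of $\lambda^m_\epsilon(p)$, in particular for divergences and infinite traces. This is straightforward from the two SOS rules for $\lambda^m_s$ in Table~\ref{sos} (the operator is ``fully transparent'': every transition of the argument lifts to a unique transition of the composite and vice versa), but it is the one spot where one must check that nothing in the $\ct$-level behaviour is lost or spuriously created.
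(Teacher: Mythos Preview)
Your proof is correct and follows precisely the approach the paper intends: the paper's own proof reads ``Just like the proof of \thm{canonical safety}'', and your argument is exactly that adaptation---the same history-recording state operator with $a(m,\sigma)=g$ when $\sigma a\in G$ and $d$ otherwise, yielding $\lambda^m_\epsilon(p)\models\textit{liveness}(g) \Leftrightarrow p\models\textit{liveness}(G)$. Your explicit handling of the $\epsilon\in G$ case and the remark on the $\ct$-bijection are sound and, if anything, slightly more careful than the paper's terse treatment.
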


\begin{proof}
Just like the proof of \thm{canonical safety}.
\end{proof}

A process $p$ has the \emph{initial progress} property if it cannot
immediately diverge or deadlock, i.e.\ if
$\epsilon\not\in\diverg(p)\cup\deadlocks(p)$.  This is a liveness
property, namely with $G$ the set of all sequences over $\Act^*$ of
length 1. It can be understood this way by regarding any occurrence of
an action as good.

\begin{theorem}{initial progress}
A precongruence for abstraction that respects the initial progress
property, respects the canonical liveness property.
\end{theorem}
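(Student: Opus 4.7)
The plan is to mirror the structure of Theorem~\ref{thm-locked}, replacing the observation ``$\tau_I(p)$ is locked iff $p\models\textit{safety}(b)$'' by a dual correspondence between initial progress and the canonical liveness property. Let $\sqsubseteq$ be a precongruence for abstraction that respects initial progress, suppose $p\sqsubseteq q$, and put $I := \Act\setminus\{g\}$, so that $\tau^{}_I$ hides every visible action other than the good action $g$. Along any path out of $\tau^{}_I(p)$ the only possible visible action is then $g$. The key equivalence to establish is
$$p\models\textit{liveness}(g) \quad \Longleftrightarrow \quad \tau^{}_I(p) \text{ has initial progress.}$$

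Once this is in hand, the theorem is immediate: precongruence gives $\tau^{}_I(p)\sqsubseteq\tau^{}_I(q)$; the left-hand side of the equivalence applied to $p$ yields that $\tau^{}_I(p)$ has initial progress; the hypothesis on $\sqsubseteq$ transports this to $\tau^{}_I(q)$; and the equivalence applied to $q$ gives $q\models\textit{liveness}(g)$.

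For the key equivalence I would argue both directions contrapositively. For ``$\Rightarrow$'', assume $\tau^{}_I(p)$ fails initial progress, so $\epsilon\in\diverg(\tau^{}_I(p))\cup\deadlocks(\tau^{}_I(p))$. Unfolding the operational rules for $\tau^{}_I$, this $\epsilon$-witness in $\tau^{}_I(p)$ lifts to a run of $p$ whose visible actions all lie in $I$ and which either extends to an infinite such run, diverges by $\tau$'s, or reaches a deadlocked state. In each of the three cases of Definition~\ref{df-traces} this produces a complete trace of $p$ not containing $g$, contradicting $p\models\textit{liveness}(g)$. For ``$\Leftarrow$'', a complete trace of $p$ avoiding $g$ lies entirely in $I^*\cup I^\infty$; after hiding, its non-$g$ visible actions all become $\tau$, turning an infinite trace into a $\tau$-divergence and a deadlock or divergence trace into an $\epsilon$-labelled deadlock or divergence of $\tau^{}_I(p)$, so initial progress of $\tau^{}_I(p)$ fails.

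The main obstacle is this case analysis for the key equivalence, in particular the observation that hiding can convert an \emph{infinite} non-$g$ trace of $p$ into a $\tau$-divergence of $\tau^{}_I(p)$. This is precisely the operational fact that makes initial progress the correct counterpart of ``locked'' on the liveness side: it is sensitive to divergence as well as deadlock, matching the fact that complete traces comprise infinite, divergent, and deadlocked behaviours alike.
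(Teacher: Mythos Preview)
Your proposal is correct and follows exactly the approach the paper intends: the paper's proof reads in its entirety ``Just like the proof of \thm{locked}'', and your argument is precisely that adaptation, with $I=\Act\setminus\{g\}$ and the key equivalence $p\models\textit{liveness}(g) \Leftrightarrow \tau^{}_I(p)$ has initial progress replacing the locked/safety correspondence. Your extra care in noting that an infinite $g$-free trace of $p$ becomes a $\tau$-divergence of $\tau^{}_I(p)$ is exactly the point that distinguishes the liveness case from the safety case, and it is handled correctly.
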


\begin{proof}
Just like the proof of \thm{locked}.
\end{proof}
By combining Theorems~\ref{thm-canonical liveness}
and~\ref{thm-initial progress} one obtains:

\begin{corollary}{initial progress}
A precongruence for abstraction and for the state operator that respects the
initial progress property, respects all liveness properties.
\end{corollary}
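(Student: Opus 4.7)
The statement is a straightforward combination of the two immediately preceding theorems, as the remark introducing the corollary already signals. So my plan is simply to chain them, being careful about which precongruence hypothesis feeds which step.

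Let $\sqsubseteq$ be any precongruence for both abstraction and the state operator that respects the initial progress property. First I would invoke \thm{initial progress}: since $\sqsubseteq$ is a precongruence for abstraction and respects initial progress, it respects the canonical liveness property $\textit{liveness}(g)$. Then I would invoke \thm{canonical liveness} in the ``if'' direction: since $\sqsubseteq$ is a precongruence for the state operator and respects $\textit{liveness}(g)$, it respects $\textit{liveness}(G)$ for every $G\subseteq\Act^*$, i.e., every liveness property.

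There is no real obstacle; the only thing to get right is the bookkeeping of which operator-precongruence hypothesis is consumed at each step. Concretely, the precongruence for abstraction is used solely to pass from initial progress to the canonical liveness property (via the argument of \thm{initial progress}, which mirrors the $\tau_I$-construction in \thm{locked} with $I:=\Act\setminus\{g\}$), and the precongruence for the state operator is used solely to upgrade from the canonical liveness property to arbitrary $\textit{liveness}(G)$ (via the state operator that records the trace performed so far and relabels an action to $g$ exactly when its occurrence causes the recorded trace to land in some prefix-set derived from $G$). Both of these ingredients are already supplied by the cited results, so the corollary follows by composition.
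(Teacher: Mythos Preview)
Your proposal is correct and matches the paper's approach exactly: the paper simply says the corollary is obtained by combining \thm{canonical liveness} and \thm{initial progress}, which is precisely the two-step chaining you spell out. Your additional bookkeeping about which precongruence hypothesis is consumed at each step is accurate and only makes the argument more explicit.
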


\begin{conjecture}{liveness}
Any precongruence for $\cal O$ that respects a single nontrivial liveness
property, respects every liveness property.
\end{conjecture}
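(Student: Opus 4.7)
My plan follows the template of the proof of \thm{safety}, adapted to the complete-trace semantics required for liveness. Fix a nontrivial $\textit{liveness}(G_0)$ respected by $\sqsubseteq$, so $G_0 \neq \emptyset$ and $\epsilon \notin G_0$. I pick $\sigma \in G_0$ of minimum length, and factor $\sigma = \sigma' c$ with $c \in \Act$; by minimality no proper prefix of $\sigma$ lies in $G_0$. Let $\textit{liveness}(c)$ denote the canonical liveness property with $c$ playing the r\^ole of $g$ in \df{canonical liveness}; \thm{canonical liveness} applies equally to this renamed canonical property, so it suffices to show that $\sqsubseteq$ respects $\textit{liveness}(c)$. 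To this end I will build, using operators from $\cal O$, a context whose instantiation $C[p]$ satisfies
\[ C[p]\models\textit{liveness}(G_0) \;\Leftrightarrow\; p \models \textit{liveness}(c) \]
for every process $p$. Granting this, precongruence gives $p\sqsubseteq q \Rightarrow C[p] \sqsubseteq C[q]$, whence $p\models\textit{liveness}(c) \Rightarrow C[p]\models\textit{liveness}(G_0) \Rightarrow C[q]\models\textit{liveness}(G_0) \Rightarrow q\models\textit{liveness}(c)$.

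For the context, I pick $c' \in \Act$ not occurring in $\sigma$; this is possible since $\Act$ is infinite. Let $\lambda^{n_1}$ be the renaming $c \mapsto c'$ that fixes all other actions, set $I := \Act \setminus \{c'\}$, let $r$ be a sequential process with $\ct(r) = \{\sigma' c'\}$ (available in the universal LTS), and let $\lambda^{n_2}$ be the renaming $c' \mapsto c$ that fixes all other actions. Define
\[ C[p] \;:=\; \lambda^{n_2}\bigl( \tau^{}_I(\lambda^{n_1}(p)) \;\|^{}_{\{c'\}}\; r \bigr). \]
Write $\tilde p := \tau^{}_I(\lambda^{n_1}(p))$; this is a ``$c$-monitor'' of $p$ whose only visible action is $c'$, occurring exactly at $p$'s $c$-positions. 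Since $c'$ is fresh to $\sigma'$, the letters of $\sigma'$ lie outside the synchronisation set $\{c'\}$, so $r$ autonomously plays out $\sigma'$ and only its trailing $c'$ must be supplied by $\tilde p$.

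For the verification, I first record the clean correspondence that $\tilde p$ has a $c'$-free complete trace iff $p$ has a $c$-free complete trace: a $c$-free deadlock of $p$ lifts to a $\tau$-reachable deadlock of $\tilde p$, and a $c$-free divergence or infinite trace of $p$ lifts to a divergence of $\tilde p$ at $\epsilon$. The partial traces of the inner composition are then the prefixes of $\sigma' c'$, which $\lambda^{n_2}$ transports to the prefixes of $\sigma$. If $p\models\textit{liveness}(c)$, $\tilde p$ never deadlocks or diverges along a run that has not yet produced $c'$; inspecting the composition, one sees that in any state below $\sigma' c'$ either $r$ still has an autonomous $\sigma'$-action available, or $r$ is waiting on $c'$ and $\tilde p$ will eventually enable it, so the only complete trace is $\sigma' c'$, renamed by $\lambda^{n_2}$ to $\sigma \in G_0$. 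Conversely, if $p\not\models\textit{liveness}(c)$, interleaving a maximal $c'$-free run of $\tilde p$ with the autonomous $\sigma'$-moves of $r$ produces a complete trace of the composition strictly below $\sigma' c'$, which $\lambda^{n_2}$ maps to a proper prefix of $\sigma$; by minimality of $\sigma$ in $G_0$, this prefix has no prefix in $G_0$.

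I expect the forward direction to be the main obstacle, because one must exclude \emph{every} complete trace of $C[p]$ strictly below $\sigma$, including ones arising from divergence. The naive safety-style candidate $(\tau^{}_J(p) \|^{}_\emptyset r_{\sigma'}) \|^{}_\Act r_\sigma$ with $J := \Act \setminus \{c\}$ breaks down exactly here: when $c$ also appears inside $\sigma'$, an interleaving in which $\tau^{}_J(p)$ contributes its $c$ before $r_{\sigma'}$ has completed $\sigma'$ and then diverges produces a divergence complete trace strictly below $\sigma$, even though $p\models\textit{liveness}(c)$. The fresh-renaming trick $c \mapsto c'$ removes this interference, since $r$'s $\sigma'$-phase is then entirely independent of $\tilde p$ and the unique synchronisation point is the final $c'$. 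The remaining verifications---tracking deadlock, divergence and infinite traces through $\lambda^{n_1}$, $\tau^{}_I$, the partial synchronisation, and $\lambda^{n_2}$---are routine case analyses.
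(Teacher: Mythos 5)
First, a point of context: the paper does not prove this statement---it is exactly what is left open as \con{liveness}---so there is no proof of record for you to be compared against, and your argument has to stand entirely on its own. As written it has a genuine gap, located in the monitor $\tilde p = \tau^{}_I(\lambda^{n_1}(p))$. Your renaming $\lambda^{n_1}$ sends $c$ to $c'$ but \emph{fixes all other actions, including $c'$ itself}, and $c'$ was chosen fresh only with respect to $\sigma$; it cannot be chosen fresh with respect to the processes being compared, since a process in the universal LTS may use every action of the infinite alphabet $\Act$. Consequently the ``clean correspondence'' you record ($\tilde p$ has a $c'$-free complete trace iff $p$ has a $c$-free complete trace) is false: an occurrence of $c'$ in $p$ passes through $\lambda^{n_1}$ and is kept visible by $\tau^{}_I$, so the monitor cannot tell it apart from an occurrence of $c$. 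Concretely, let $q$ be the process that performs $c'$ and then deadlocks. Then $q \not\models \textit{liveness}(c)$, yet the composition $\tilde q \|^{}_{\{c'\}} r$ has $\sigma' c'$ as its only complete trace, so $C[q]$ has only the complete trace $\sigma \in G_0$ and $C[q]\models\textit{liveness}(G_0)$. Hence the last implication in your chain, $C[q]\models\textit{liveness}(G_0) \Rightarrow q \models\textit{liveness}(c)$, fails, and with it the claimed equivalence $C[p]\models\textit{liveness}(G_0) \Leftrightarrow p\models\textit{liveness}(c)$.

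Note also that you cannot simply borrow the ``w.l.o.g.\ $p$ and $q$ avoid the distinguished action'' manoeuvre from the proof of \thm{liveness characterisation}: there the target relation $\sqsubseteq_{\it FDI}^\bot$ is known to be a precongruence for renaming and invariant under injective renaming, so one may pre-rename and undo it afterwards, whereas here the desired conclusion $q\models\textit{liveness}(c)$ concerns the original, unrenamed $q$ and the only tool available is the single property $\textit{liveness}(G_0)$. The natural repair is to make the monitor total rather than partial: let $\lambda^{n_1}$ map $c$ to $c'$ and \emph{every} other action (including $c'$) to one fixed action $d\neq c'$, so that after hiding $I=\Act\setminus\{c'\}$ the visible $c'$s of $\tilde p$ are exactly the $c$-occurrences of $p$. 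With that change your analysis of the composition with $r$ (all traces are prefixes of $\sigma'c'$; a deadlock strictly below $\sigma'c'$ forces a deadlock of $\tilde p$ reachable by $\tau$s; a divergence below it forces a divergence of $\tilde p$ at $\epsilon$; minimality of $|\sigma|$ in $G_0$ disposes of the proper prefixes in the converse direction) does appear to go through. But since a correct version would settle a conjecture the author deliberately left open, the ``routine case analyses'' you defer are precisely what must be written out and checked in full before this can be accepted.
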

%
%
Let $\sqsubseteq_{\it liveness}$ denote the preorder that is fully
abstract w.r.t.\ the class of liveness properties and $\cal O$.  I
will proceed to characterise $\sqsubseteq_{\it liveness}$ as the
preorder $\sqsubseteq_{\it FDI}^\bot$ based on failures, divergences and
infinite traces that is also used in the work on CSP
\cite{Ros97}. \emph{Failures} of a process $p$ are defined below; they
are pairs $\langle \sigma, X\rangle$ such that $p$ can perform the
sequence of visible actions $\sigma$ and then reach a state in which
no further progress can be made in case the environment allows only
those visible actions to occur that are listed in $X$. The preorder
$\sqsubseteq_{\it FDI}^\bot$ does not take into account any information
about the behaviour of processes that can be thought of taking place
after a divergence. One of the ways to erase this information from the
set of failures, divergences and infinite traces of a process is by
means of \emph{flooding}. Flooded sets of failures, divergences and
infinite traces are indicated by the subscript $_\bot$.

\begin{definition}{FDI}
Let $p\in \mathbb{P}$.
\begin{itemise}
\item $\ini(p):=\{\alpha\in\Act_\tau \mid \exists q.\; p \goto{\alpha} q\}$.
\item $\failures(p):=\{\langle \sigma, X\rangle \in \Act^*\times\pow(\Act) \mid
  \exists q.\; p \dto{\sigma} q \wedge \ini(q) \cap (X\cup\{\tau\})=\emptyset\}$.
\item $\divd(p):= \{\sigma\rho \mid \sigma\in\diverg(p) \wedge \rho\in\Act^*\}$.
\item $\infd(p):= \{\sigma\rho \mid \sigma\in\diverg(p) \wedge \rho\in\Act^\infty\}$.
\item $\faild(p):= \{\langle\sigma\rho,X\rangle \mid
  \sigma\in\diverg(p) \wedge \rho\in\Act^* \wedge X\subseteq \Act\}$.
\end{itemise}
\end{definition}
So $\deadlocks(p)\mathbin=\{\sigma | \langle\sigma,A\rangle\mathbin\in\failures(p)\}$ and
$\ptr(p)\mathbin=\diverg(p) \cup\{\sigma | \langle\sigma,\emptyset\rangle\mathbin\in\failures(p)\}$.

\begin{theorem}{liveness characterisation}
$p \sqsubseteq_{\it liveness} q ~~\Leftrightarrow~~
\begin{array}[t]{@{}r@{~\supseteq~}l@{}}
\divd(p) & \divd(q) \wedge \mbox{}\\
\infd(p) & \infd(q) \wedge \mbox{}\\
\faild(p) & \faild(q).
\end{array}$
\end{theorem}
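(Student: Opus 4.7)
The plan is to prove the equivalence by denoting the right-hand preorder $\sqsubseteq_{\it FDI}^\bot$ and showing it coincides with $\sqsubseteq_{\it liveness}$ via two inclusions.

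For $\sqsubseteq_{\it FDI}^\bot \,\subseteq\, \sqsubseteq_{\it liveness}$, I would first verify that $\sqsubseteq_{\it FDI}^\bot$ is a precongruence for all three operators in $\cal O$ via the standard CSP-style equational calculations expressing $\divd$, $\infd$ and $\faild$ of a composite process in terms of those of its components; divergence-flooding is exactly what lets these equations close. Then I would check that $\sqsubseteq_{\it FDI}^\bot$ respects every liveness property by splitting on the three kinds of complete trace of $q$: for a deadlock trace $\sigma\in\deadlocks(q)$, the inclusion $\langle\sigma,\Act\rangle\in\faild(p)$ yields either a deadlock trace of $p$ equal to $\sigma$ or a divergence prefix $\sigma_0\leq\sigma$ of $p$; for $\sigma\in\diverg(q)$, the inclusion $\divd(q)\subseteq\divd(p)$ produces a divergence prefix of $p$; and for an infinite trace the inclusion $\infd(q)\subseteq\infd(p)$ yields either an infinite trace of $p$ or a divergence prefix. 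In each case the resulting complete trace of $p$ contains a prefix in $G$ (since $p\models\textit{liveness}(G)$), and that prefix is also a prefix of $\sigma$. Full abstraction of $\sqsubseteq_{\it liveness}$ then concludes this direction.

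For $\sqsubseteq_{\it liveness}\,\subseteq\,\sqsubseteq_{\it FDI}^\bot$ I would work contrapositively. Fix a canonical liveness property $G:=\Act^* g\Act^*$ for a fresh ``good'' action $g$, and a generic context shape $C[\_\!\_] := \tau^{}_{\Act\setminus\{g\}}(\_\!\_ \|^{}_\Act r)$. Given a witness that one of the three inclusions fails, I would design the tester $r$ so that $C[q]$ exhibits a $g$-free complete trace (from the witnessing failure, divergence or infinite trace of $q$) while $C[p]$ forces $g$ on every complete trace. In the failure case $\langle\sigma,X\rangle\in\faild(q)\setminus\faild(p)$, $r$ reads $\sigma$ in lock-step and then offers each $x\in X$ followed by $g$; in the divergence case $\sigma\in\divd(q)\setminus\divd(p)$, $r$ follows $\sigma$ then emits $g$; in the infinite-trace case $\sigma\in\infd(q)\setminus\infd(p)$, $r$ cooperates with $\sigma$ indefinitely and emits $g$ whenever $p$ deviates.

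The hard part will be the failure case of the contrapositive, where one must argue that $C[p]$ has no $g$-free complete trace: ruling out deadlocks (using that every state reached by $p$ after $\sigma$ has an enabled action in $X\cup\{\tau\}$), divergences (using that no prefix of $\sigma$ lies in $\diverg(p)$), and infinite non-$g$ runs. The delicate point is that $p$ may take branches off the intended $\sigma$-synchronisation path that are not accounted for by the witness, so the tester $r$ must route such branches back to $g$ as well. This is where the interplay between universal synchronisation $\|^{}_\Act$, the abstraction $\tau^{}_I$, and the universality of the ambient LTS (allowing a suitably rich tester to exist in $\mathbb{P}$) has to be orchestrated carefully.
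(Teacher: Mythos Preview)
Your ``$\Leftarrow$'' direction matches the paper's approach and is fine.

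The ``$\Rightarrow$'' direction, however, has a genuine flaw in the choice of context.  You fix a \emph{fresh} action $g$ (not occurring in any trace of $p$ or $q$) and take $C[\_\!\_] := \tau^{}_{\Act\setminus\{g\}}(\_\!\_ \|^{}_\Act r)$.  But $\|^{}_\Act$ synchronises on \emph{all} visible actions, including $g$; hence for $g$ to occur in $p\|^{}_\Act r$ both $p$ and $r$ would have to perform it, which $p$ cannot.  Consequently $C[p]$ and $C[q]$ can only perform $\tau$'s, and neither ever satisfies $\textit{liveness}(g)$.  Your context therefore cannot separate $p$ from $q$ at all.  The paper instead uses $\|^{g} := \|^{}_{\Act\setminus\{g\}}$, which leaves $g$ \emph{un}synchronised so that the tester $r$ can emit $g$ unilaterally; with this operator no abstraction is needed, and the paper in fact stresses that $\tau^{}_I$ is avoidable here.

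Even after swapping $\|^{}_\Act$ for $\|^{g}$, your tester designs are still incomplete.  In each case your $r$ ``follows $\sigma$'' and only produces $g$ at the end (or after $\sigma x$), but nothing prevents $p$ from \emph{deadlocking} at some proper prefix $\rho<\sigma$: the hypothesis $\sigma\notin\divd(p)$ (resp.\ $\langle\sigma,X\rangle\notin\faild(p)$) rules out divergences along $\sigma$ but says nothing about deadlocks there.  If $p$ deadlocks at $\rho$, your composition deadlocks with a $g$-free complete trace $\rho$, so $C[p]\not\models\textit{liveness}(g)$ and the argument collapses.  The paper's testers therefore offer $g$ at \emph{every} prefix $\rho<\sigma$ (e.g.\ $\ct(r)=\{\rho g\mid \rho\leq\sigma\}$ in the divergence case), so that whenever $p$ stalls, $r$ can still escape to $g$.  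In the failure case the paper goes one step further: rather than forcing $g$ to occur after $\sigma a$ (which would require controlling $p$'s behaviour past $\sigma a$, where it may well diverge), it uses the \emph{general} liveness property $G=\{\rho g\mid \rho<\sigma\}\cup\{\sigma a\mid a\in X\}$, declaring the synchronisation on some $a\in X$ itself to be the good event.
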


\begin{proof}
Let $\sqsubseteq_{\it FDI}^\bot$ be the preorder defined by:
$p\sqsubseteq_{\it FDI}^\bot q$ iff the right-hand side of \thm{liveness
  characterisation} holds.

\noindent
``$\Leftarrow$'':
It suffices to establish that
$\sqsubseteq_{\it FDI}^\bot$ is a liveness respecting precongruence.

To show that $\sqsubseteq_{\it FDI}^\bot$ respects liveness, let
$G\subseteq \Act^*$, $p\sqsubseteq_{\it FDI}^\bot q$, and suppose
$p \models \textit{liveness}(G)$. I need to show that
$q \models \textit{liveness}(G)$. So suppose $\sigma \in \ct(q)$.  Then
one out of three possibilities must apply: either
$\sigma\in\diverg(g) \subseteq\divd(g) \subseteq\divd(p)$ or
$\sigma \in \infinite(q) \subseteq \infd(q) \subseteq \infd(p)$ or
$\langle\sigma,\Act\rangle \in \failures(q) \subseteq \faild(q)
\subseteq \faild(q)$.
In the first case $\rho\in\diverg(p)\subseteq \ct(p)$
for some $\rho \leq \sigma$; in the second case either $\sigma
\in\infinite(p) \subseteq \ct(p)$ or $\rho\in\diverg(p)\subseteq \ct(p)$
for some $\rho < \sigma$; and in the third case either
$\langle\sigma,\Act\rangle\in\failures(p)$ or
$\rho\in\diverg(p)\subseteq \ct(p)$ for some $\rho \leq \sigma$.
In all three cases $\rho\in \ct(p)$ for some $\rho\leq\sigma$.
Since $p \models \textit{liveness}(G)$, there must be a $\nu\leq\rho$
with $\nu\in G$. As $\nu\leq\sigma$ it follows that
$q \models \textit{liveness}(G)$.

That $\sqsubseteq_{\it FDI}^\bot$ is a precongruence for $\|^{}_S$ and $\tau^{}_I$
has been established in \cite{Ros97} by means of the following observations:
$$\begin{array}{@{}rcl@{}}
\divd(p\|^{}_S q) &=& \{\sigma\rho \mid \exists
  \langle\nu,X\rangle\in\faild(p), \xi \in \divd(q).\; \sigma\in
  \nu\|^{}_S\xi \wedge \rho\in\Act^*\} \cup \mbox{}\\
 && \{\sigma\rho \mid \exists \nu\in\divd(p),
  \langle\xi,X\rangle\in\faild(q).\; \sigma\in \nu\|^{}_S\xi\wedge \rho\in\Act^*\}\\
\infd(p\|^{}_S q) &=&
  \{\sigma \mid \exists  \nu\in\infd(p), \xi \in \infd(q).\;
  \sigma\in \nu\|^{}_S\xi\} \cup \mbox{}\\
 && \{\sigma \mid \exists \langle\nu,X\rangle\in\faild(p), \xi \in \infd(q).\;
  \sigma\in \nu\|^{}_S\xi\} \cup \mbox{}\\
 && \{\sigma \mid \exists \nu\in\infd(p),
  \langle\xi,X\rangle\in\faild(q).\; \sigma\in \nu\|^{}_S\xi\} \cup \mbox{}\\
 && \{\sigma\rho \mid \sigma \in \divd(p\|^{}_S q) \wedge \rho\in\Act^\infty\}\\
\faild(p\|^{}_S q) &=& \{\langle\sigma,X\cup Y\rangle \mid \exists
  \langle\nu,X\rangle\in\faild(p), \langle\xi,Y\rangle\in\faild(q).\\
 && \hfill X\setminus S = Y\setminus S \wedge
    \sigma\in \nu\|^{}_S\xi\} \cup \mbox{}\\
 && \{\langle\sigma,X\rangle \mid \sigma\in\divd(p\|^{}_S q)
  \wedge X\subseteq\Act\}.\\
\divd(\tau^{}_I(p)) &=& \{\tau^{}_I(\sigma)\rho \mid \tau^{}_I(\sigma), \rho \in A^*
  \wedge \sigma \in\infd(p)\cup\divd(p)\} \\
\infd(\tau^{}_I(p)) &=& \{\tau^{}_I(\sigma) \mid \tau^{}_I(\sigma)\in A^\infty
  \wedge \sigma \in\infd(p)\} \\
 && \cup \{\sigma\rho \mid \sigma\in\divd(\tau^{}_I(p)) \wedge \rho\in\Act^\infty\}\\
\faild(\tau^{}_I(p)) &=& \{\langle\tau^{}_I(\sigma),X\rangle \mid
  \langle\sigma,X\cup I\rangle \in \faild(p)\} \cup \mbox{}\\
 && \{\langle\sigma,X\rangle \mid \sigma\in\divd(\tau^{}_I(p))
  \wedge X\subseteq\Act\}.
\end{array}$$
Here $\tau^{}_I(\sigma)$ for $\sigma\in\Act^\infty$ is the supremum,
w.r.t.\ the prefix order $\leq$ on $\Act^\omega$, of the set
$\{\tau^{}_I(\rho) \mid \rho < \sigma\}$.

Likewise,  $\sqsubseteq_{\it FDI}^\bot$ is a congruence for $\lambda^m_s$:
$$\begin{array}{@{}rcl@{}}
\divd(\lambda^m_s(p)) &=& \{\lambda^m_s(\sigma)\rho \mid \sigma\in\divd(p) \wedge \rho\in\Act^*\} \\
\infd(\lambda^m_s(p)) &=& \{\lambda^m_s(\sigma) \mid \sigma\in\infd(p)\}
  \cup \{\sigma\rho \mid \sigma\in\divd(\lambda^m_s(p)) \wedge \rho\in\Act^\infty\}\\
\faild(\lambda^m_s(p)) &=& \{\langle\lambda^m_s(\sigma),X\rangle \mid
                \langle\sigma,\lambda^{-m}_s(X)\rangle \in\faild(p)\} \cup \mbox{}\\
 && \{\langle\sigma,X\rangle \mid \sigma\in\divd(\lambda^m_s(p)) \wedge X\subseteq\Act\}.\\
\end{array}$$
Here $\lambda^{-m}_s(X) := \{a \in A \mid a(m,s)\in X\}$.\pagebreak[3]

``$\Rightarrow$'': Let $\sqsubseteq$ be any liveness respecting
precongruence, and suppose $p\sqsubseteq q$. I have to establish that
$p\sqsubseteq_{\it FDI}^\bot q$.  W.l.o.g.\ I may assume that neither $p$
nor $q$ has any trace containing the action $g$. For let $\lambda^m$
be an injective renaming operator such that $g$ is not in the image of
$\lambda^m$.  Then $\lambda^m(p) \sqsubseteq \lambda^m(q)$. Suppose
one can establish $\lambda^m(p) \sqsubseteq_{\it FDI}^\bot \lambda^m(q)$.
Since $\sqsubseteq_{\it FDI}^\bot$ is a precongruence for renaming, this
yields $p\equiv_{\it FDI}^\bot \lambda^{-m}(\lambda^m(p)) \sqsubseteq_{\it
FDI}^\bot \lambda^{-m}(\lambda^m(p)) \equiv_{\it FDI}^\bot q$.

Suppose $\divd(p) \not\supseteq \divd(q)$; say $\sigma\in
\divd(q)\setminus \divd(p)$.  So there is no $\rho\leq \sigma$ with
$\rho\in\diverg(p)$.  Let $r$ be a deterministic process such that
$\ct(r)=\{\rho g \mid \rho \leq \sigma\}$. Then each complete trace of
$p\|^g r$ contains $g$.  Here I write $\|^g$ for
\plat{$\|^{}_{A\setminus\{g\}}$}, the interleaving operator that
synchronises on all visible actions except $g$.  As $\sqsubseteq$ is a
precongruence, $p\sqsubseteq q$ implies $p\|^g r \sqsubseteq q\|^g r$,
and since $\sqsubseteq$ respects the canonical liveness property, I
obtain that each complete trace of $q\|^g r$ must contain
$g$. However, $\rho\mathbin\in\diverg(q)$ for some
$\rho\mathbin\leq\sigma$. So $\rho \mathbin\in \ct(q\|^g r)$, although
$\rho$ does not contain~$g$.

Suppose $\infd(p) \not\supseteq \infd(q)$; say $\sigma\in
\infd(q)\setminus \infd(p)$.  So $\sigma \not\in \infinite(p)$ and
there is no $\rho < \sigma$ with $\rho\in\diverg(p)$.  Let $r$ be a
deterministic process such that $\ct(r)=\{\rho g \mid \rho < \sigma\}
\cup \{\sigma\}$. Then each complete trace of $p\|^g r$ contains $g$.
As $\sqsubseteq$ is a precongruence, $p\sqsubseteq q$ implies $p\|^g r
\sqsubseteq q\|^g r$, and since $\sqsubseteq$ respects the canonical
liveness property, I obtain that each complete trace of $q\|^g r$
must contain~$g$. However, either $\sigma\in\infinite(q)$ or
$\rho\in\diverg(q)$ for some $\rho<\sigma$. So either $\sigma \in
\ct(q\|^g r)$ or $\rho \in \ct(q\|^g r)$, and neither $\sigma$ nor
$\rho$ contains $g$.

Suppose $\faild(p) \not\supseteq \faild(q)$; say
$\langle\sigma,X\rangle\in \faild(q)\setminus \faild(p)$.  So
$\langle\sigma,X\rangle\not\in\failures(p)$ and there is no $\rho\leq
\sigma$ with $\rho\in\diverg(p)$. Let $r$ be a deterministic process
with $\ct(r)=\{\rho g \mid \rho < \sigma\} \cup \{\sigma a \mid a \in
X\}$, and consider the liveness property given by $G:=\{\rho g \mid
\rho<\sigma\} \cup \{\sigma a \mid a \in X\}$. Then $p\|^g r \models
\textit{liveness}(G)$.  As $\sqsubseteq$ is a precongruence,
$p\sqsubseteq q$ implies $p\|^g r \sqsubseteq q\|^g r$, and since
$\sqsubseteq$ respects liveness properties, also $q\|^g r \models
\textit{liveness}(G)$.  However, either
$\langle\sigma,X\rangle\in\failures(q)$ or there is an $\rho\leq
\sigma$ with $\rho\in\diverg(q)$. So either $\sigma \in \ct(q\|^g r)$
or $\rho \in \ct(q\|^g r)$ for some $\rho\leq \sigma$, contradicting
that $q\|^g r \models \textit{liveness}(G)$.
\end{proof}
The standard refinement preorder used in CSP is in fact the
\emph{failures-divergences} preorder $\sqsubseteq_{\it FD}$, defined
exactly like $\sqsubseteq_{\it FDI}^\bot$, but abstracting from the
infinite traces. As remarked in \cite{Ros97}, this can be done because
in CSP one normally restricts attention to processes $p$ with the property
that for any $\sigma\in\Act^*$ either $\sigma\in\divd(p)$ or there are
only finitely many processes $q$ with \plat{$p \dto{\sigma} q$}. For
such processes the set $\infd(p)$ is, with K\"onigs Lemma, completely
determined by $\faild(p)$ and $\divd(p)$, and thus need not be
explicitly recorded. When extending CSP to processes not having this
property, the component $\infd$ should be added to the semantics of
processes \cite{Ros97}. In fact, $\sqsubseteq_{\it FDI}^\bot$ is the
coarsest precongruence for $\cal O$ contained in $\sqsubseteq_{\it
FD}$: if $p$, $q$ and $r$ are the processes used in the $\infd$-case
of the above proof, and $I:=\Act\setminus \{g\}$, then $\epsilon \in
\divd(\tau^{}_I(q\|^g r))\setminus\divd(\tau^{}_I(p\|^g r))$.

The above characterisation as $\sqsubseteq_{\it FDI}^\bot$ of the coarsest
congruence for $\cal O$ that respects all liveness properties, is
somewhat robust under the choice of $\cal O$. It holds already for
with just $\|^g$ and injective renaming (for these are the only two
operators that are just in the proof), and it remains true when
adding in all operators of CSP \cite{BHR84}, as $\sqsubseteq_{\it
FDI}^\bot$ is known to be a precongruence for all of them \cite{Ros97}.

By \cor{initial progress}, the above characterisation also remains
valid when requiring respect for the initial progress property only,
but to this end all three operators of $\cal O$ are needed.
%
This result has in essence been obtained already by Bill Roscoe in
\cite{Ros97}. The state operator does not feature in \cite{Ros97}; its
r\^ole in this full abstraction result is taken over by a renaming
operator that allows renaming an action $a$ into a choice between two
actions $b$ and $c$.  When ignoring this difference in syntax,
\thm{liveness characterisation} can be obtained as an immediate
corollary of \cor{initial progress} and that result. The main reason
for using the above proof instead is to show that the concealment or
abstraction operator is not needed here.

By \thm{canonical liveness}, $\sqsubseteq_{\it FDI}^\bot$ is even fully
abstract w.r.t.\ the partially synchronous interleaving and state
operators, and the canonical liveness property. This result, like the
full abstraction result of \cite{Ros97}, does not hold without the
state operator, or something equally powerful, even if renaming and
abstraction is allowed to be used. Namely, as pointed out by Antti
Puhakka \cite{Pu01}, one would fail to distinguish the following two
processes:\\
\expandafter\ifx\csname graph\endcsname\relax \csname newbox\endcsname\graph\fi
\expandafter\ifx\csname graphtemp\endcsname\relax \csname newdimen\endcsname\graphtemp\fi
\setbox\graph=\vtop{\vskip 0pt\hbox{%
    \special{pn 8}%
    \special{ar 1050 400 50 50 0 6.28319}%
    \special{pa 1085 365}%
    \special{pa 1200 150}%
    \special{pa 1400 150}%
    \special{pa 1500 400}%
    \special{pa 1350 550}%
    \special{pa 1200 500}%
    \special{pa 1085 435}%
    \special{sp}%
    \special{sh 1.000}%
    \special{pa 1160 506}%
    \special{pa 1085 435}%
    \special{pa 1185 463}%
    \special{pa 1160 506}%
    \special{fp}%
    \graphtemp=.5ex\advance\graphtemp by 0.100in
    \rlap{\kern 1.450in\lower\graphtemp\hbox to 0pt{\hss $\tau$\hss}}%
    \special{ar 550 400 50 50 0 6.28319}%
    \special{pa 600 400}%
    \special{pa 1000 400}%
    \special{fp}%
    \special{sh 1.000}%
    \special{pa 900 375}%
    \special{pa 1000 400}%
    \special{pa 900 425}%
    \special{pa 900 375}%
    \special{fp}%
    \graphtemp=\baselineskip\multiply\graphtemp by -1\divide\graphtemp by 2
    \advance\graphtemp by .5ex\advance\graphtemp by 0.400in
    \rlap{\kern 0.800in\lower\graphtemp\hbox to 0pt{\hss $a$\hss}}%
    \special{ar 50 400 50 50 0 6.28319}%
    \graphtemp=.5ex\advance\graphtemp by 0.400in
    \rlap{\kern 0.050in\lower\graphtemp\hbox to 0pt{\hss $\bullet$\hss}}%
    \special{pa 100 400}%
    \special{pa 500 400}%
    \special{fp}%
    \special{sh 1.000}%
    \special{pa 400 375}%
    \special{pa 500 400}%
    \special{pa 400 425}%
    \special{pa 400 375}%
    \special{fp}%
    \graphtemp=\baselineskip\multiply\graphtemp by -1\divide\graphtemp by 2
    \advance\graphtemp by .5ex\advance\graphtemp by 0.400in
    \rlap{\kern 0.300in\lower\graphtemp\hbox to 0pt{\hss $a$\hss}}%
    \graphtemp=.5ex\advance\graphtemp by 0.400in
    \rlap{\kern 2.050in\lower\graphtemp\hbox to 0pt{\hss $\not\equiv_{\it liveness}$\hss}}%
    \special{ar 3550 400 50 50 0 6.28319}%
    \special{pa 3585 365}%
    \special{pa 3700 150}%
    \special{pa 3900 150}%
    \special{pa 4000 400}%
    \special{pa 3850 550}%
    \special{pa 3700 500}%
    \special{pa 3585 435}%
    \special{sp}%
    \special{sh 1.000}%
    \special{pa 3660 506}%
    \special{pa 3585 435}%
    \special{pa 3685 463}%
    \special{pa 3660 506}%
    \special{fp}%
    \graphtemp=.5ex\advance\graphtemp by 0.100in
    \rlap{\kern 3.950in\lower\graphtemp\hbox to 0pt{\hss $\tau$\hss}}%
    \special{ar 3050 400 50 50 0 6.28319}%
    \special{pa 3100 400}%
    \special{pa 3500 400}%
    \special{fp}%
    \special{sh 1.000}%
    \special{pa 3400 375}%
    \special{pa 3500 400}%
    \special{pa 3400 425}%
    \special{pa 3400 375}%
    \special{fp}%
    \graphtemp=\baselineskip\multiply\graphtemp by -1\divide\graphtemp by 2
    \advance\graphtemp by .5ex\advance\graphtemp by 0.400in
    \rlap{\kern 3.300in\lower\graphtemp\hbox to 0pt{\hss $a$\hss}}%
    \special{ar 2550 400 50 50 0 6.28319}%
    \graphtemp=.5ex\advance\graphtemp by 0.400in
    \rlap{\kern 2.550in\lower\graphtemp\hbox to 0pt{\hss $\bullet$\hss}}%
    \special{pa 2600 400}%
    \special{pa 3000 400}%
    \special{fp}%
    \special{sh 1.000}%
    \special{pa 2900 375}%
    \special{pa 3000 400}%
    \special{pa 2900 425}%
    \special{pa 2900 375}%
    \special{fp}%
    \graphtemp=\baselineskip\multiply\graphtemp by -1\divide\graphtemp by 2
    \advance\graphtemp by .5ex\advance\graphtemp by 0.400in
    \rlap{\kern 2.800in\lower\graphtemp\hbox to 0pt{\hss $a$\hss}}%
    \special{ar 2950 50 50 50 0 6.28319}%
    \special{pa 2588 367}%
    \special{pa 2912 83}%
    \special{fp}%
    \special{sh 1.000}%
    \special{pa 2821 130}%
    \special{pa 2912 83}%
    \special{pa 2854 168}%
    \special{pa 2821 130}%
    \special{fp}%
    \graphtemp=\baselineskip\multiply\graphtemp by -1\divide\graphtemp by 2
    \advance\graphtemp by .5ex\advance\graphtemp by 0.225in
    \rlap{\kern 2.750in\lower\graphtemp\hbox to 0pt{\hss $a$~\hss}}%
    \hbox{\vrule depth0.525in width0pt height 0pt}%
    \kern 3.969in
  }%
}%

\centerline{\raisebox{1em}{\box\graph}}\vspace{1ex}

\section{Conditional Liveness Properties}\label{sec-conditional}

Figure~\ref{conditional} presents two processes that have the same
\begin{figure}
\input{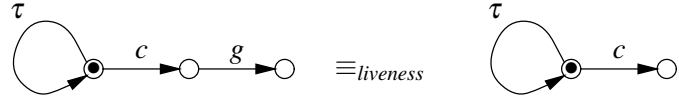}
\centerline{\raisebox{0em}{\box\graph}}
\caption{\it Two processes with the same liveness properties but different
  conditional liveness properties}
\label{conditional}
\end{figure}
liveness properties in any CSP-context. The fact that only the
left-hand process \emph{can} do something good doesn't matter here, as
neither of the two processes is \emph{guaranteed} to do something
good: they may never proceed beyond their initial $\tau$-loops.
Nevertheless, from a practical point of view, the difference between
these two processes may be enormous. It could be that the action $c$
comes with a huge cost, that is only worth making when something good
happens afterwards. Only the right-hand side process is able to incur
the cost without any benefits, and for this reason it lacks an
important property that the left-hand process has. I call such
properties \emph{conditional liveness properties} \cite{GV06,Levy08}.
A \emph{conditional liveness property} is a property that says that
\begin{quote}\it
under certain conditions something good will eventually happen.
\end{quote}
To formulate a canonical conditional liveness property, assume that
the alphabet $\Act$ contains two specific action $c$ and $g$, where
the occurrence of $c$ is the condition, and the occurrence of $g$
is \emph{good}. The canonical conditional liveness property now says that
\textbf{if $c$ occurs then $g$ will eventually happen}.
\begin{definition}{canonical conditional liveness}
A process $p$ satisfies the \emph{canonical conditional liveness property},
notation $p\models \textit{liveness}_c(g)$, if every complete trace of
$p$ that contains the action $c$ also contains the action $g$.
\end{definition}
To arrive at a general concept of conditional liveness property for
labelled transition systems, assume that some condition, and some
notion of \emph{good} is defined.  Now, to judge whether a process $p$
satisfies this conditional liveness property, one should judge first
of all in which states the condition is fulfilled.  All observable
behaviour of $p$ that is recorded in a labelled transition system
until one comes to such a verdict, is the sequence of visible actions
performed until that point. Thus the condition is completely
determined by the set sequences of visible actions that, when
performed by $p$, lead to such a judgement. Next one should judge
whether $p$ can reach a state in which one would say that something
good had happened. Again, this judgement can be expressed in terms of
the sequences of visible actions that lead to such a state.

\begin{definition}{conditional liveness}
A \emph{conditional liveness property} of processes in an LTS is given
by two sets $C,G\subseteq \Act^*$.
A process $p$ \emph{satisfies} this conditional liveness property, notation
$p\models \textit{liveness}_C(G)$, when each complete trace of
$p$ that has a prefix in $C$, also has prefix in $G$.
\end{definition}
For the sake of added generality, one could make the notion of success
dependent on the particular sequence of actions that fulfilled the
condition.  This would make $G$ a function from $C$ to $\pow(\Act^*)$
and the requirement would be that each complete trace of $p$ that has
a prefix $\sigma\in C$, also has prefix in $G(\sigma)$.  However, such
a generalised conditional liveness property can be expressed as a
conjunction of standard ones, and a preorder that respects a
given collection of properties also respects their conjunction.

\begin{theorem}{canonical conditional liveness}
A precongruence for the state operator respects every conditional
liveness property iff it respects the canonical conditional liveness property.
\end{theorem}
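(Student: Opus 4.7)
The plan is to mimic the template of Theorem~\ref{thm-canonical safety} (and~\ref{thm-canonical liveness}), using the state operator whose internal state records the trace of visible actions performed so far. The ``only if'' direction will be immediate: the canonical conditional liveness property is itself a conditional liveness property, namely the one obtained by taking $C$ to be the set of sequences containing $c$ and $G$ to be the set of sequences containing $g$, so any preorder that respects every $\textit{liveness}_C(G)$ automatically respects $\textit{liveness}_c(g)$.

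For the ``if'' direction I would start from a precongruence $\sqsubseteq$ for the state operator that respects $\textit{liveness}_c(g)$, fix an arbitrary $\textit{liveness}_C(G)$, and build an interface $m = (\Act^*, \act, \eff)$ with set of internal states $\Act^*$ and $\sigma(m,a) := \sigma a$, whose renaming function translates $\textit{liveness}_C(G)$ on $r$ into $\textit{liveness}_c(g)$ on $\lambda^m_\epsilon(r)$. Picking a neutral action $d \in \Act \setminus \{c,g\}$, I would define
\[
a(m,\sigma) := \begin{cases} g & \text{if } \sigma a \in G, \\ c & \text{if } \sigma a \in C \setminus G, \\ d & \text{otherwise.} \end{cases}
\]
Since $\lambda^m_\epsilon$ acts as a pure renaming of visible actions that preserves $\tau$-transitions, divergence, and deadlock, complete traces of $\lambda^m_\epsilon(r)$ are in bijective correspondence with those of $r$, a complete trace $a_1 a_2 \cdots$ being sent to $a_1(m,\epsilon) a_2(m,a_1) \cdots$. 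The renamed trace will then contain $g$ iff the original has a prefix in $G$, and will contain $c$ iff the original has a prefix in $C \setminus G$. The key lemma to establish is therefore $\lambda^m_\epsilon(r) \models \textit{liveness}_c(g) \iff r \models \textit{liveness}_C(G)$, after which the argument closes exactly as in Theorem~\ref{thm-canonical safety}: precongruence gives $\lambda^m_\epsilon(p) \sqsubseteq \lambda^m_\epsilon(q)$, respect of the canonical property transfers to $\lambda^m_\epsilon(q)$, and the equivalence pulls back to $q \models \textit{liveness}_C(G)$.

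The hard part, and the only genuinely new difficulty compared to Theorems~\ref{thm-canonical safety} and~\ref{thm-canonical liveness}, is handling the possible overlap $C \cap G$. A naive encoding that emits $c$ on every $C$-prefix and $g$ on every $G$-prefix cannot be realised by a single renaming, since each action is mapped to exactly one output; worse, emitting $c$ eagerly on a prefix in $C \cap G$ would wrongly require a \emph{strictly later} $g$ in the renamed trace, even though $\textit{liveness}_C(G)$ is satisfied by having the $G$-prefix coincide with the $C$-prefix. Giving $G$ priority in the definition of $a(m,\sigma)$ sidesteps this cleanly, because $p \models \textit{liveness}_C(G)$ is logically equivalent to ``every complete trace with a prefix in $C \setminus G$ has a prefix in $G$'': any $C$-prefix that already lies in $G$ witnesses the property for its own trace, so removing such prefixes from the condition set loses nothing.
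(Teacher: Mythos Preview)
Your proposal is correct and essentially identical to the paper's own proof: the paper handles the overlap $C\cap G$ by first observing that $\textit{liveness}_C(G)$ and $\textit{liveness}_{C\setminus G}(G)$ are satisfied by the same processes, so one may assume $C\cap G=\emptyset$ without loss of generality, and then defines the action function exactly as you do. Your presentation just folds this reduction directly into the case split (giving $G$ priority) rather than stating it as a preliminary WLOG, but the underlying idea and the resulting state operator are the same.
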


\begin{proof}
``\emph{Only if}'' follows because the canonical conditional liveness
  property is in fact a conditional liveness property, namely the one
  with $C$ being the set of those sequences that contain the action
  $c$, and $G$ the set of those sequences that contain the action $g$.

\noindent
``\emph{If}'':  Again I use  a state  operator that  remembers exactly
what sequence of actions has occurred so far. Thus the set of internal
states   of  its   interface  specification   $m$  is   $\Act^*$,  and
$\sigma(m,a):=\sigma  a$ for  all $\sigma\in  \Act^*$  and $a\in\Act$.
Note that the properties $\textit{liveness}_C(G)$ and
$\textit{liveness}_{C\setminus G}(G)$ are satisfied by the same
processes, so w.l.o.g.\ I may restrict attention to properties
$\emph{liveness}_C(G)$ with $C\cap G=\emptyset$.
Given such a property, define
$$a(m,\sigma):=\left\{\begin{array}{ll}c&\mbox{if}~\sigma a \in C\\
g&\mbox{if}~\sigma a \in G\\
d&\mbox{otherwise.}\end{array}\right.$$
Then $\lambda^m_\epsilon(p)\models\textit{liveness}_c(g)$ iff
$p\models \textit{liveness}_C(G)$. Thus, if $p \sqsubseteq q$ and
$p\models \textit{liveness}_C(G)$, then $\lambda^m_\epsilon(p) \sqsubseteq
\lambda^m_\epsilon(q)$ and $\lambda^m_\epsilon(p)\models
\textit{liveness}_c(g)$.
Hence $\lambda^m_\epsilon(q)\models \textit{liveness}_c(g)$, so
$q\models \textit{liveness}_C(G)$.
\end{proof}


An element $\sigma\not\in\diverg(p)\cup\deadlocks(p)$ is called a
\emph{deadlock/divergence trace} of a process $p$. For any
$\sigma\in\Act^*$, not having a deadlock/divergence trace $\sigma$ is
a conditional liveness property, namely with $C:=\{\sigma\}$ and
$G:=\{\sigma a\mid a\in\Act\}$.
Using similar techniques as for \cor{locked}, one can establish:

\begin{corollary}{deadlock/divergence}
A precongruence for abstraction and for the state operator that
respects the property of having no deadlock/divergence trace $c$,
respects all liveness properties.
\end{corollary}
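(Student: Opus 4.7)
The plan is to mirror the derivation of \cor{locked}. By \thm{canonical conditional liveness} a precongruence for the state operator that respects the canonical conditional liveness property $\textit{liveness}_c(g)$ already respects every conditional liveness property, and every liveness property $\textit{liveness}(G)$ is trivially a conditional liveness property (take $C:=\{\epsilon\}$), so all that remains is to show that a precongruence $\sqsubseteq$ for abstraction and the state operator that respects ``having no deadlock/divergence trace $c$'' also respects $\textit{liveness}_c(g)$. I plan to do this by building a transformation $T$, composed only from abstraction and a state operator, for which $T(p)$ has no deadlock/divergence trace $c$ iff $p\models\textit{liveness}_c(g)$.

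Concretely, take a state operator $\lambda^m$ with three internal states $\{0,1,2\}$ (intuitively, ``$c$ and $g$ both still unseen'', ``$c$ seen but $g$ not yet'', and ``done tracking'') and three distinct actions $c,e,d\in\Act$, all different from $g$. In state $0$, $c$ is renamed to $c$ (moving to state $1$), $g$ to $d$ (moving to state $2$), and every other action to $d$ (staying in $0$); in state $1$, $g$ is renamed to $e$ (moving to $2$) while every other action, $c$ included, is renamed to $d$ (staying in $1$); in state $2$ every action is renamed to $d$. Then set $T(p):=\tau^{}_I(\lambda^m_0(p))$ with $I:=\Act\setminus\{c,e\}$, so that $\tau^{}_I$ hides $d$ and in fact everything except $c$ and $e$; by construction the only visible traces of $T(p)$ are initial segments of $\epsilon$, $c$, or $ce$.

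The heart of the argument is the claim that $T(p)$ has $c$ as a deadlock or divergence trace iff $p$ has a complete trace containing $c$ but not $g$. A finite complete trace of $p$ with a $c$ and no $g$ runs the state operator from $0$ into state $1$ via the first $c$ and never leaves, producing visible trace $c$ followed by a deadlock or divergence of $T(p)$; conversely any $T(p)$-run with visible trace $c$ that then deadlocks or diverges corresponds to a $p$-run with a $c$, no $g$, ending either in a $p$-deadlock, a $p$-divergence, or an infinite non-$g$ continuation. Once this equivalence is in place the corollary is immediate: $p\sqsubseteq q$ yields $T(p)\sqsubseteq T(q)$ by the precongruence assumptions, and the hypothesis on $\sqsubseteq$ transports ``no deadlock/divergence trace $c$'' from $T(p)$ to $T(q)$, which retranslates to $\textit{liveness}_c(g)$ being transported from $p$ to $q$.

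The trickiest case, and the reason every non-$g$ action in state $1$ must be mapped to the hidden action $d$, is the infinite trace: an infinite complete trace of $p$ containing a $c$ but no $g$ is not itself a deadlock or divergence trace of $p$, so it is not automatic that it induces a $c$ deadlock/divergence trace in $T(p)$. The construction handles it because, once the first $c$ has been performed and the state operator has settled into state $1$, every remaining action of the infinite trace is renamed to $d$ and then hidden by $\tau^{}_I$, turning the infinite tail into a genuine $\tau$-divergence of $T(p)$ reachable immediately after the single visible $c$.
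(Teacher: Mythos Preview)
Your proof is correct and follows precisely the route the paper points to with ``similar techniques as for \cor{locked}'': you reduce the canonical conditional liveness property to ``no deadlock/divergence trace $c$'' via a transformation built from a state operator and an abstraction, and then invoke \thm{canonical conditional liveness} to cover all conditional liveness properties. The construction is sound; in particular, keeping $e$ visible (so that $I=\Act\setminus\{c,e\}$) is essential, since otherwise an infinite complete trace of $p$ of the form $c\,g\,a\,a\,\cdots$ would wrongly yield $c$ as a divergence trace of $T(p)$---your construction avoids this pitfall.

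One remark worth making explicit: you actually prove the stronger conclusion that such a precongruence respects all \emph{conditional} liveness properties, whereas the corollary as printed says only ``liveness properties''. Your reading is the right one: the paper uses this corollary immediately afterwards to establish \cor{dd}, whose ``if'' direction requires precisely that $\sqsubseteq_{\it d/d}^{-1}$ respects all conditional liveness properties, not merely liveness properties. So the printed statement appears to be a slip, and your proof delivers what is actually needed.
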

Let $\sqsubseteq_{\it cond.~liveness}$ denote the preorder that is fully
abstract w.r.t.\ the class of conditional liveness properties and $\cal O$.
Furthermore, write $\sqsubseteq_{\it d/d}$ for the coarsest
precongruence for $\cal O$ such that $q \sqsubseteq_{\it d/d} p$
implies $\deadlocks(q)\cup\diverg(q) \subseteq
\deadlocks(p)\cup\diverg(p)$.

\begin{corollary}{dd}
$p \sqsubseteq_{\it cond.~liveness} q$ iff $q \sqsubseteq_{\it d/d} p$.
\end{corollary}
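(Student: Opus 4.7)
The plan is to prove each direction by exhibiting an auxiliary precongruence and appealing to the coarsest-ness of the preorder on the other side; no genuinely new construction beyond what is already in the paper should be needed.

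For ``$\Rightarrow$'', let $R$ be the converse of $\sqsubseteq_{\it cond.~liveness}$, i.e.\ $q \mathrel{R} p$ iff $p \sqsubseteq_{\it cond.~liveness} q$. The precongruence clause (\ref{precongruence}) is symmetric under reversal of the preorder, so $R$ is still a precongruence for $\cal O$. As noted just before the statement, for each $\sigma\in\Act^*$ the property ``$\sigma\not\in\deadlocks(\_)\cup\diverg(\_)$'' is a conditional liveness property (with $C:=\{\sigma\}$, $G:=\{\sigma a\mid a\in\Act\}$), and $\sqsubseteq_{\it cond.~liveness}$ respects every such property. Contraposing, $q\mathrel{R}p$ implies $\deadlocks(q)\cup\diverg(q) \subseteq \deadlocks(p)\cup\diverg(p)$. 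Since $\sqsubseteq_{\it d/d}$ is by definition the \emph{coarsest} precongruence for $\cal O$ with this property, $R \subseteq {\sqsubseteq_{\it d/d}}$, which says exactly that $p\sqsubseteq_{\it cond.~liveness}q$ entails $q\sqsubseteq_{\it d/d}p$.

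For ``$\Leftarrow$'', I symmetrically consider the converse $R'$ of $\sqsubseteq_{\it d/d}$, i.e.\ $p\mathrel{R'}q$ iff $q\sqsubseteq_{\it d/d}p$. Again reversal preserves being a precongruence for $\cal O$, and by the defining clause of $\sqsubseteq_{\it d/d}$ the relation $R'$ respects, for every fixed $c\in\Act$, the single conditional liveness property of not having a deadlock/divergence trace $c$. Now \cor{deadlock/divergence} upgrades this to: $R'$ respects \emph{every} conditional liveness property. Since $\sqsubseteq_{\it cond.~liveness}$ is the coarsest precongruence for $\cal O$ with that property, $R' \subseteq {\sqsubseteq_{\it cond.~liveness}}$, which is the desired implication.

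The main ``obstacle'' is really only bookkeeping: one must keep straight that reversing a preorder preserves the precongruence clause, and that ``$\sqsubseteq$ respects the good property $\phi$'' translates, under reversal, into the inclusion of sets of traces going the other way. Once that is unwound, both halves are immediate from the coarsest-ness characterisations of $\sqsubseteq_{\it cond.~liveness}$ and $\sqsubseteq_{\it d/d}$, together with \cor{deadlock/divergence} and the observation that not having a given deadlock/divergence trace is itself a conditional liveness property --- no new argument about contexts or transition systems is required.
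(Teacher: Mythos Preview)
Your proof is correct and follows essentially the same route as the paper's: both directions hinge on the two facts that (i) absence of a given deadlock/divergence trace is a conditional liveness property, and (ii) \cor{deadlock/divergence} upgrades respect for that single property to respect for all conditional liveness properties. You have merely made explicit what the paper leaves implicit, namely that one passes to the converse relation (which remains an $\cal O$-precongruence) and then invokes the coarsest-ness of the target preorder; the paper compresses this into ``follows immediately from \cor{deadlock/divergence}'' and ``follows from the observation that the absence of any deadlock/divergence trace $\sigma$ is a conditional liveness property''.
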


\begin{proof}
``\emph{If}'' follows immediately from \cor{deadlock/divergence}.
``\emph{Only if}'' follows from the observation that the absence of any
deadlock/divergence trace $\sigma$ is a conditional liveness property.
\end{proof}
Antti Puhakka \cite{Pu01} has given a characterisation of the coarsest
congruence that preserves deadlock/\linebreak[2]divergence traces,
$\equiv_{\it d/d}$.  His arguments easily extend to a characterisation
of $\sqsubseteq_{\it d/d}$ and hence, using \cor{dd}, of
$\sqsubseteq_{\it cond.~liveness}$.  Below I will give a direct proof
of the same result. It shows that this characterisation is already
valid when merely requiring the precongruence property for $\|_S$ and
injective renaming.

As for $\sqsubseteq_{\it liveness}$, the characterisation of
$\sqsubseteq_{\it cond.~liveness}$ is in terms of failures,
divergences and infinite traces, and again some information needs to
be erased, but less than in the case of $\sqsubseteq_{\it liveness}$.
This time we need to forget about failures $\langle \sigma,X\rangle
\in\failures(p)$ such that $\sigma\in\diverg(p)$, and about infinite
traces of $p$ that have arbitrary long prefixes in $\diverg(p)$. In
\cite{Pu01} this is achieved by removal of such failures and infinite
traces; here, in order to stress the similarity with the refinement
preorder of CSP, I equivalently apply the method of flooding.

\begin{definition}{Antti}
Let $p\in \mathbb{P}$.
\begin{itemise}
\item $\infdd(p):= \infinite(p) \cup \{\sigma\in\Act^\infty \mid
  \forall\rho<\sigma \exists \nu\in\diverg(p).\; \rho\leq\nu<\sigma\}$.
\item $\faildd(p):= \failures(p) \cup \{\langle\sigma,X\rangle \mid
  \sigma\in\diverg(p)\wedge X\subseteq \Act\}$.
\end{itemise}
\end{definition}

\begin{theorem}{conditional liveness characterisation}
$p \sqsubseteq_{\it cond.~liveness} q ~~\Leftrightarrow~~
\begin{array}[t]{@{}r@{~\supseteq~}l@{}}
\diverg(p) & \diverg(q) \wedge \mbox{}\\
\infdd(p) & \infdd(q) \wedge \mbox{}\\
\faildd(p) & \faildd(q).
\end{array}$
\end{theorem}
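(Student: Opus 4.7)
The plan is to adapt the proof of \thm{liveness characterisation} to the refined flooding of \df{Antti}. Let $\sqsubseteq^*$ denote the preorder defined by the right-hand side of the theorem. I will show (a) $\sqsubseteq^*$ respects every conditional liveness property and is a precongruence for $\cal O$, and (b) every precongruence for $\cal O$ respecting all conditional liveness properties is contained in $\sqsubseteq^*$.

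For ``$\Leftarrow$'', respect for $\textit{liveness}_C(G)$ follows by case analysis on how $\sigma\in\ct(q)$ with a $C$-prefix $\rho_C$ arises. If $\sigma\in\diverg(q)\subseteq\diverg(p)$, or if $\sigma\in\deadlocks(q)$ so that $\langle\sigma,\Act\rangle\in\failures(q)\subseteq\faildd(p)$ puts $\sigma$ into $\deadlocks(p)\cup\diverg(p)$, then $\sigma\in\ct(p)$ and inherits a $G$-prefix from $p\models\textit{liveness}_C(G)$. The delicate subcase is $\sigma\in\infinite(q)\subseteq\infdd(p)$ with $\sigma\notin\infinite(p)$: instantiating the quantifier in the definition of $\infdd$ at $\rho:=\rho_C$ yields some $\nu\in\diverg(p)$ with $\rho_C\leq\nu<\sigma$, so $\nu\in\ct(p)$ has $C$-prefix $\rho_C$ and inherits a $G$-prefix that is simultaneously a prefix of $\sigma$. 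Precongruence for $\|^{}_S$, $\tau^{}_I$, and $\lambda^m_s$ is then verified by explicit formulas for $\diverg$, $\infdd$, and $\faildd$ on composite processes, analogous to the formulas in \thm{liveness characterisation} but reflecting the weaker flooding (notably, $\diverg$ in place of $\divd$ does not propagate to extensions, and $\faildd$ floods only at the divergence point itself).

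For ``$\Rightarrow$'', suppose $\sqsubseteq$ is a precongruence for $\cal O$ respecting conditional liveness and $p\sqsubseteq q$. By the injective-renaming trick from \thm{liveness characterisation}, I may assume $p$ and $q$ perform no fresh ``good'' action $g$. Each inclusion is handled by constructing a deterministic $r$, using the context $\|^g$, and choosing a suitable conditional liveness property. For $\diverg$ with witness $\sigma\in\diverg(q)\setminus\diverg(p)$, take $r$ with $\ct(r)=\{\rho g\mid\rho\leq\sigma\}$ and property $\textit{liveness}_{\{\sigma\}}(\{\sigma g\})$: on the $p$-side, $\sigma\notin\diverg(p)$ rules out any $\tau$-divergence at $\sigma$, so every maximal extension past $\sigma$ has visible trace $\sigma g$; on the $q$-side, $\sigma$ itself is a divergence trace of $q\|^g r$ and thus violates the property. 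For $\faildd$ with witness $\langle\sigma,X\rangle$, reuse the $r$ and $G$ from the $\faild$-case of \thm{liveness characterisation} together with condition $\{\sigma\}$; the weaker hypothesis $\sigma\notin\diverg(p)$ (in place of ``no prefix of $\sigma$ lies in $\diverg(p)$'') still suffices, because the condition only triggers at $\sigma$ itself.

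The main obstacle is the $\infdd$ case, whose hypothesis is a quantification over prefixes rather than a pointwise condition. Given $\sigma\in\infdd(q)\setminus\infdd(p)$, negating the defining clause of $\infdd$ furnishes a finite proper prefix $\rho_0<\sigma$ such that no $\nu\in\diverg(p)$ satisfies $\rho_0\leq\nu<\sigma$. Using the $r$ from the $\infd$-case of \thm{liveness characterisation} (with $\ct(r)=\{\rho g\mid\rho<\sigma\}\cup\{\sigma\}$) together with $\textit{liveness}_{\{\rho_0\}}(\{\rho g\mid\rho_0\leq\rho<\sigma\})$, every complete trace of $p\|^g r$ with prefix $\rho_0$ must be of the form $\rho g$ for some $\rho_0\leq\rho<\sigma$, because $r$ always offers $g$, $\sigma\notin\infinite(p)$ rules out the infinite trace $\sigma$, and the hypothesis on $\rho_0$ rules out any divergence of the composition along the way. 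In contrast, the $q$-context violates the property either via the infinite trace $\sigma$ (if $\sigma\in\infinite(q)$) or via a divergence trace $\nu\in\diverg(q)$ with $\rho_0\leq\nu<\sigma$, whose existence is guaranteed by $\sigma\in\infdd(q)$.
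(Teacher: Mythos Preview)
Your proof is correct and follows essentially the same strategy as the paper: establish that $\sqsubseteq^*$ respects conditional liveness and is an $\cal O$-precongruence for ``$\Leftarrow$'', and for ``$\Rightarrow$'' produce, for each of the three components, a deterministic test process $r$ and a conditional liveness property that $p$ satisfies in context but $q$ violates. The ``$\Leftarrow$'' argument, including the instantiation of the $\infdd$-quantifier at $\rho_C$, coincides with the paper's.

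The only noteworthy divergence is in the constructions for ``$\Rightarrow$''. The paper introduces \emph{two} fresh actions $c$ and $g$ and, for the $\diverg$ and $\infdd$ cases, arranges for $r$ to emit $c$ as a condition-marker so that the \emph{canonical} property $\textit{liveness}_c(g)$ can be invoked (the $\faildd$ case uses a tailored property, as you do). You instead keep a single fresh action $g$, reuse the test processes from \thm{liveness characterisation} verbatim, and compensate by choosing bespoke condition sets~$C$ (namely $\{\sigma\}$ or $\{\rho_0\}$). Both routes are sound; yours is marginally more economical in the operators actually exercised ($\|^g$ and injective renaming suffice), while the paper's emphasises that already the canonical conditional liveness property does most of the work, in line with \thm{canonical conditional liveness}.
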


\begin{proof}
Let $\sqsubseteq_{\it FDI}^d$ be the preorder defined by:
$p\sqsubseteq_{\it FDI}^d q$ iff the right-hand side of \thm{liveness
  characterisation} holds.

\noindent
``$\Leftarrow$'':
It suffices to establish that
$\sqsubseteq_{\it FDI}^d$ is a precongruence for $\cal O$ that respects
all conditional liveness properties.

To show that $\sqsubseteq_{\it FDI}^d$ respects conditional liveness
properties, let $C,G\subseteq \Act^*$, $p\sqsubseteq_{\it FDI}^d q$, and
suppose $p \models \textit{liveness}_C(G)$. I need to show that
$q \models \textit{liveness}_C(G)$. So suppose $\sigma \in \ct(q)$ and
$\rho\in C$ for some prefix $\rho\leq\sigma$.  Then
one out of three possibilities must apply: either
$\sigma\in\diverg(g) \subseteq\diverg(p)$ or
$\sigma \in \infinite(q) \subseteq \infdd(q) \subseteq \infdd(p)$ or
$\langle\sigma,\Act\rangle \in \failures(q) \subseteq \faildd(q)
\subseteq \faildd(p)$.
In the first and last case, one has $\sigma\in\ct(p)$.
Since $p \models \textit{liveness}_C(G)$, there must be a $\xi\leq\sigma$
with $\xi\in G$, which had to be shown.
In the second case either $\sigma \in\infinite(p) \subseteq \ct(p)$,
in which case the argument proceeds as above, or
$\exists\nu\in\diverg(p) \subseteq \ct(p)$ with $\rho\leq\nu<\sigma$.
In the latter case, there must be a $\xi\leq\nu$ with $\xi\in G$, and
as $\nu<\sigma$ it follows that $q \models \textit{liveness}_C(G)$.

That $\sqsubseteq_{\it FDI}^d$ is a precongruence for $\|^{}_S$,
$\tau^{}_I$ and $\lambda^m_s$ follows from the following observations:
$$\begin{array}{@{}rcl@{}}
\diverg(p\|^{}_S q) &=& \{\sigma \mid \exists
  \langle\nu,X\rangle\in\faildd(p), \xi \in \diverg(q).\; \sigma\in
  \nu\|^{}_S\xi\} \cup \mbox{}\\
 && \{\sigma \mid \exists \nu\in\diverg(p),
  \langle\xi,X\rangle\in\faildd(q).\; \sigma\in \nu\|^{}_S\xi\}\\
\infdd(p\|^{}_S q) &=&
  \{\sigma \mid \exists  \nu\in\infdd(p), \xi \in \infdd(q).\;
  \sigma\in \nu\|^{}_S\xi\} \cup \mbox{}\\
 && \{\sigma \mid \exists \langle\nu,X\rangle\in\faildd(p), \xi \in \infdd(q).\;
  \sigma\in \nu\|^{}_S\xi\} \cup \mbox{}\\
 && \{\sigma \mid \exists \nu\in\infdd(p),
  \langle\xi,X\rangle\in\faildd(q).\; \sigma\in \nu\|^{}_S\xi\} \cup \mbox{}\\
 && \{\sigma \in\Act^\infty \mid \forall \rho<\sigma \exists \nu\in
  \diverg(p\|^{}_S q).\; \rho\leq\nu<\sigma\}\\
\faildd(p\|^{}_S q) &=& \{\langle\sigma,X\cup Y\rangle \mid \exists
  \langle\nu,X\rangle\in\faildd(p), \langle\xi,Y\rangle\in\faildd(q).\\
 && \hfill X\setminus S = Y\setminus S \wedge
    \sigma\in \nu\|^{}_S\xi\} \cup \mbox{}\\
 && \{\langle\sigma,X\rangle \mid \sigma\in\diverg(p\|^{}_S q)
  \wedge X\subseteq\Act\}.\\
\diverg(\tau^{}_I(p)) &=& \{\tau^{}_I(\sigma) \mid \tau^{}_I(\sigma) \in A^*
  \wedge \sigma \in\infdd(p)\cup\diverg(p)\} \\
\infdd(\tau^{}_I(p)) &=& \{\tau^{}_I(\sigma) \mid \tau^{}_I(\sigma)\in A^\infty
  \wedge \sigma \in\infdd(p)\} \cup\mbox{}\\
 && \{\sigma \in\Act^\infty \mid \forall \rho<\sigma \exists \nu\in
  \diverg(\tau^{}_I(p)).\; \rho\leq\nu<\sigma\}\\
\faildd(\tau^{}_I(p)) &=& \{\langle\tau^{}_I(\sigma),X\rangle \mid
  \langle\sigma,X\cup I\rangle \in \faildd(p)\} \cup \mbox{}\\
 && \{\langle\sigma,X\rangle \mid \sigma\in\diverg(\tau^{}_I(p))
  \wedge X\subseteq\Act\}\\
\diverg(\lambda^m_s(p)) &=& \{\lambda^m_s(\sigma) \mid \sigma\in\diverg(p)\} \\
\infdd(\lambda^m_s(p)) &=& \{\lambda^m_s(\sigma) \mid
\sigma\in\infdd(p)\} \cup \mbox{}\\
 && \{\sigma \in\Act^\infty \mid \forall \rho<\sigma \exists \nu\in
  \diverg(\lambda^m_s(p)).\; \rho\leq\nu<\sigma\}\\
\faildd(\lambda^m_s(p)) &=& \{\langle\lambda^m_s(\sigma),X\rangle \mid
                \langle\sigma,\lambda^{-m}_s(X)\rangle \in\faildd(p)\} \cup \mbox{}\\
 && \{\langle\sigma,X\rangle \mid \sigma\in\diverg(\lambda^m_s(p))
		\wedge X\subseteq\Act\}.\\
\end{array}$$

``$\Rightarrow$'': Let $\sqsubseteq$ be any precongruence for $\cal O$
that respects conditional liveness properties, and suppose
$p\sqsubseteq q$. I have to establish that $p\sqsubseteq_{\it FDI}^d q$.
W.l.o.g.\ I may assume that neither $p$ nor $q$ has any trace
containing the actions $c$ or $g$. The argument for this is as in the
proof of \thm{liveness characterisation}.

Suppose $\diverg(p) \not\supseteq \diverg(q)$; say $\sigma\in
\diverg(q)\setminus \diverg(p)$.  Let $r$ be a deterministic process
such that $\ct(r)=\{\sigma c g \}$. Then each complete trace of
$p\|^{c,g} r$ that contains $c$ also contains $g$.  Here I write
$\|^{c,g}$ for \plat{$\|^{}_{A\setminus\{c,g\}}$}, the interleaving
operator that synchronises on all visible actions except $c$ and $g$.
As $\sqsubseteq$ is a precongruence, $p\sqsubseteq q$ implies
$p\|^{c,g} r \sqsubseteq q\|^{c,g} r$, and since $\sqsubseteq$
respects the canonical conditional liveness property, I obtain that
each complete trace of $q\|^{c,g} r$ that contains $c$ must also
contain $g$. However, as $\sigma\in\diverg(q)$, $\sigma c \in
\diverg(q\|^{c,g} r) \subseteq \ct(q\|^{c,g} r)$, although $\sigma c$
does not contain~$g$.

Suppose $\infdd(p) \not\supseteq \infdd(q)$; say $\sigma\in
\infdd(q)\setminus \infdd(p)$.  So $\sigma \not\in \infinite(p)$ and
there is a $\rho < \sigma$ such that $\rho\leq\rho\nu<\sigma$ for no
sequence $\rho\nu\in\diverg(p)$.  Let $r$ be a deterministic process
such that $\ct(r)=\{\rho c \nu g \mid \rho\nu < \sigma\} \cup
\{\sigma\}$.  Then each complete trace of $p\|^g r$ that contains $c$,
must also contain $g$.  As $\sqsubseteq$ is a precongruence,
$p\sqsubseteq q$ implies $p\|^{c,g} r \sqsubseteq q\|^{g,c} r$, and
since $\sqsubseteq$ respects the canonical conditional liveness
property, I obtain that each complete trace of $q\|^{c,g} r$ that
contains $c$ must also contain~$g$.  However, either
$\sigma\in\infinite(q)$ or $\rho\nu\in\diverg(q)$ for some
$\rho\leq\rho\nu<\sigma$. In each case $q\|^{c,g} r$ has a complete
trace that contains $c$ but not $g$.

Suppose $\faildd(p) \not\supseteq \faildd(q)$; say
$\langle\sigma,X\rangle\in \faildd(q)\setminus \faildd(p)$.  So
$\langle\sigma,X\rangle\not\in\failures(p)$ and
$\sigma\not\in\diverg(p)$.  Let $r$ be a deterministic process with
$\ct(r)= \{\sigma c a \mid a \in X\}$, let $C$ be the set of sequences
containing $c$, and consider the conditional liveness property given by
$C$ and $G:= \{\sigma c a \mid a \in X\}$.  Then $p\|^c r \models
\textit{liveness}_C(G)$.  As $\sqsubseteq$ is a precongruence,
$p\sqsubseteq q$ implies $p\|^c r \sqsubseteq q\|^c r$, and since
$\sqsubseteq$ respects conditional liveness properties, also $q\|^g r
\models \textit{liveness}_C(G)$.  However, either
$\langle\sigma,X\rangle\in\failures(q)$ or $\sigma\in\diverg(q)$. So
$\sigma c \in \ct(q\|^g r)$, contradicting that $q\|^c r \models
\textit{liveness}_C(G)$.
\end{proof}
In \cite{Ros04}, Bill Roscoe has shown that $\sqsubseteq_{\it FDI}^d$ is a
precongruence for all operators of CSP; he also developed a new fixed
point theory that shows that it is a congruence for recursion as well.

\section{Linear Time Properties}\label{sec-LT}

Safety, liveness, and conditional liveness properties, as studied in
the previous sections, are special cases of \emph{linear time
properties}.  A linear time property can be thought of as any
requirement on the observable content of the runs of a process. The
property is satisfied by a process when the observable content of all
its maximal runs satisfy this requirement. Hence a linear time property
can be formalised by the set of sequences over $\Act^\omega$ that,
when performed in a maximal run of a process, meet the requirement.
\begin{definition}{LT}
A \emph{linear time property} of processes in an LTS is given by a set
$P\subseteq \Act^\omega$.
A process $p$ \emph{satisfies} this property, notation
$p\models P$, when $\ct(p)\subseteq P$.
\end{definition}
A safety property is a special kind of linear time property, namely
$\emph{safety}(B) =\{\sigma\mathbin\in\Act^\omega \mid \neg \exists \rho\mathbin\in
B.\; \rho\leq\sigma\}$. Likewise, $\emph{liveness}(G)=
\{\sigma\in\Act^\omega \mid \exists \rho\in G.\; \rho\leq\sigma\}$,
and\\ $\emph{liveness}_C(G)=
\{\sigma\in\Act^\omega \mid (\exists \rho\in C.\; \rho\leq\sigma)
\Rightarrow (\exists \nu\in G.\; \nu\leq\sigma)\}$.

In \cite{AS85} and most subsequent work, liveness properties are
formalised in a different way than in this paper. For the canonical
liveness property it is fundamentally impossible to ever tell that
it is not going to be satisfied when one has only observed a finite
prefix of a maximal run of a process. For if ``something good'' is
promised to happen, it is always possible to assume it will be further
in the future. In \cite{AS85}, this is taken to be the defining
characteristic of liveness properties, and a property $P$ is called a
liveness property iff $\forall\rho\in\Act^*.\exists \sigma\in P.\;
\rho\leq \sigma$.

The property $\emph{liveness}(G)$ with $G=\{a\}$ for instance says
that the first visible action of a process should be an $a$.
It is a liveness property in my sense, since the first action being an
$a$ can be thought of as a good thing that happened eventually; here
the requirement that it has to happen as first action could be part of
one's concept of \emph{good}. However, it is not a liveness property
as formalised in \cite{AS85} and subsequent work, since the
occurrence of a $b\neq a$ as first action proves that the property
will never be satisfied.

The property that from some point onwards all visible actions a
process performs should be $g$'s, is an example of a liveness property
in the sense of \cite{AS85} that is not a liveness property in my
sense. Namely, at no point can one ever tell that something good has
happened. 

A well know theorem \cite{AS85} says that any linear time property $P$
can be written as the conjunction ${\it safety}(B) \cap P_{\it liveness}$
of a safety property and a liveness property in the sense of \cite{AS85}.
Namely,
\begin{center}
$B := \{\rho\in \Act^* \mid \neg\exists \sigma\in P.\;\rho\leq\sigma\}$
\quad and \quad $P_{\it liveness}:=P \cup (\Act^\omega\setminus\emph{safety}(B))$.
\end{center}
Such a theorem does not hold for my liveness properties.

My characterisation of $\sqsubseteq_{\it liveness}$ would still be
valid if I would have taken as class of liveness properties the
intersection of mine and the ones from \cite{AS85}. This follows
immediately from \thm{canonical liveness}, as the canonical liveness
property is in this intersection. So the extra generality in my
definition is harmless. However, the extra restriction makes a
difference, as the canonical conditional liveness property, for
instance, is a liveness property in the sense of \cite{AS85}.

Liveness properties in the sense of \cite{AS85} are studied because
proving them requires a different tool set than proving safety
properties. However, as far as practical applications are concerned,
one is mostly interested in conjunctions of safety and liveness
properties, i.e.\ general linear time properties. I will therefore not
try to characterise coarsest congruences that respect just the
liveness properties in the sense of \cite{AS85}.

The coarsest congruence respecting all linear time properties has been
characterised as \emph{NDFD-equivalence} by Roope Kaivola and Antti
Valmari in \cite{KV92}; this results extends to preorders in a
straightforward way. The NDFD preorder can be defined just like
$\sqsubseteq_{\it FDI}^d$, except that $\inf(\_\!\_)$ is used instead
of $\infd(\_\!\_)$. In fact, this result can also be obtained as
corollary of what we have seen so far.

\begin{theorem}{lt characterisation}
$p \sqsubseteq_{\textit{\scriptsize lt-properties}} q ~~\Leftrightarrow~~
\begin{array}[t]{@{}r@{~\supseteq~}l@{}}
\diverg(p) & \diverg(q) \wedge \mbox{}\\
\infinite(p) & \infinite(q) \wedge \mbox{}\\
\faildd(p) & \faildd(q).
\end{array}$
\end{theorem}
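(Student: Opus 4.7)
My approach is to characterize $\sqsubseteq_{\textit{\scriptsize lt-properties}}$ by mirroring \thm{liveness characterisation} and \thm{conditional liveness characterisation}, but obtaining the result largely as a corollary of what has already been established. Let $\sqsubseteq_{\it NDFD}$ denote the preorder defined by the right-hand side of the statement.

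For the $\Rightarrow$-direction, the key observation is that every conditional liveness property is a linear time property, so any precongruence respecting all linear time properties automatically respects all conditional liveness properties. Hence \thm{conditional liveness characterisation} gives $\diverg(p) \supseteq \diverg(q)$ and $\faildd(p) \supseteq \faildd(q)$ for free. To upgrade the weaker conclusion $\infdd(p)\supseteq\infdd(q)$ to the desired $\infinite(p)\supseteq\infinite(q)$, I pick, for each $\sigma \in \Act^\infty$, the linear time property $P_\sigma := \Act^\omega \setminus \{\sigma\}$. Since $\diverg, \deadlocks \subseteq \Act^*$, one has $r \models P_\sigma$ iff $\sigma \notin \infinite(r)$. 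Applying $p \sqsubseteq_{\textit{\scriptsize lt-properties}} q$ in the trivial context $\_\!\_$ to each $P_\sigma$ yields $\infinite(p) \supseteq \infinite(q)$.

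For the $\Leftarrow$-direction, I must show $\sqsubseteq_{\it NDFD}$ respects every linear time property and is a precongruence for $\cal O$. Respect is easy: from the three reverse inclusions I first derive $\ct(p) \supseteq \ct(q)$, the only nontrivial case being $\sigma \in \deadlocks(q)$, where $\langle \sigma, \Act\rangle \in \failures(q) \subseteq \faildd(q) \subseteq \faildd(p)$ forces $\sigma \in \deadlocks(p) \cup \diverg(p) \subseteq \ct(p)$. Then $p \models P$, i.e.\ $\ct(p) \subseteq P$, immediately gives $\ct(q) \subseteq P$, i.e.\ $q \models P$.

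Precongruence is the main technical burden. The compositionality equations for $\diverg(C[\cdot])$ and $\faildd(C[\cdot])$ under $C\in\{\|^{}_S,\tau^{}_I,\lambda^m_s\}$ can be copied verbatim from the proof of \thm{conditional liveness characterisation}; since $\infdd$ is definable from $\infinite$ and $\diverg$, every quantity used there remains expressible in our three invariants. What is new is to exhibit compositionality equations for $\infinite$ itself. For $\tau^{}_I$ and $\lambda^m_s$ these are straightforward, e.g.\ $\infinite(\tau^{}_I(p)) = \{\tau^{}_I(\sigma) \mid \sigma\in\infinite(p)\wedge\tau^{}_I(\sigma)\in\Act^\infty\}$, without any flooding clause, because divergences of $\tau^{}_I(p)$ contribute only to $\diverg(\tau^{}_I(p))$ (and thence to $\infdd(\tau^{}_I(p))$), not to $\infinite(\tau^{}_I(p))$ proper. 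The hard part will be the $\|^{}_S$ case: an infinite trace of $p\|^{}_S q$ can arise either from both components contributing infinite traces that interleave compatibly with the $S$-synchronisation, from only one contributing an infinite trace while the other reaches a stable state refusing an appropriate subset of $S$ (so that the continuation consists of non-$S$ actions), or from divergence on one side permitting unboundedly many non-$S$ actions on the other. Each such case has to be spelled out in terms of $\diverg$, $\infinite$ and $\faildd$ of $p$ and $q$, using the refusal information in $\faildd$ to witness that subsequent $S$-actions are indeed blocked; this mirrors, but does not duplicate, the flooded equations for $\infd$ and $\infdd$ used earlier.
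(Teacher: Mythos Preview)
Your approach is essentially the paper's: both directions are obtained as corollaries, invoking \thm{conditional liveness characterisation} for $\diverg$ and $\faildd$ and treating $\infinite$ separately. Two minor points where the paper is slicker: for ``$\Rightarrow$'' it gets $\infinite(p)\supseteq\infinite(q)$ in one shot by taking the single linear time property $P:=\ct(p)$ rather than one $P_\sigma$ per infinite trace; and for the precongruence, the compositionality of $\infinite$ under $\|^{}_S$ is \emph{simpler} than you suggest, not harder --- when one component contributes only a finite sequence $\xi$, all that is required is $\xi\in\ptr(q)$, equivalently $\langle\xi,\emptyset\rangle\in\faildd(q)$, so no nontrivial refusal set and no separate divergence clause is needed (the paper accordingly just says the observations are ``similar, but simpler'' than those for $\sqsubseteq_{\it FDI}^d$).
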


\begin{proof}
Let $\sqsubseteq_{\it NDFD}$ be the preorder defined by:
$p\sqsubseteq_{\it NDFD} q$ iff the right-hand side of \thm{lt
  characterisation} holds.

\noindent
``$\Leftarrow$'':
It suffices to establish that
$\sqsubseteq_{\it NDFD}$ is a precongruence for $\cal O$ that respects
all linear time properties.

To show that $\sqsubseteq_{\it NDFD}$ respects linear time properties,
let $P\subseteq \Act^\omega$, $p\sqsubseteq_{\it NDFD} q$, and suppose
$p \models P$. I need to show that $q \models P$. So suppose $\sigma
\in \ct(q)$.  Then either $\sigma\in\diverg(g) \subseteq\diverg(p)$ or
$\sigma \in \infinite(q) \subseteq \infinite(q)$ or
$\langle\sigma,\Act\rangle \in \failures(q) \subseteq \faildd(q)
\subseteq \faildd(p)$.  In the last case, one has either
$\langle\sigma,\Act\rangle \in \failures(p)$ or
$\sigma\in\diverg(p)$. So in all cases $\sigma\in\ct(p)$.
Since $p \models P$, it must be that $\sigma\in P$.
It follows that $\ct(q)\subseteq P$, i.e.\ $q\models P$.

That $\sqsubseteq_{\it NDFD}$ is a precongruence for $\|^{}_S$,
$\tau^{}_I$ and $\lambda^m_s$ follows from similar, but simpler,
observations as in the proof of \thm{conditional liveness characterisation}.

``$\Rightarrow$'': Let $\sqsubseteq$ be any precongruence for $\cal O$
that respects linear time properties, and suppose $p\sqsubseteq q$. I
have to establish that $p\sqsubseteq_{\it NDFD} q$.  That $\diverg(p)
\supseteq \diverg(q)$ and $\faildd(p) \supseteq \faildd(q)$ follows
immediately from \thm{conditional liveness characterisation}, using
that conditional liveness properties are linear time properties.
That $\infinite(p) \supseteq \infinite(q)$ follows immediately by
considering the linear time property $\ct(p)$.
\end{proof}
To obtain this result it suffices to define
$\sqsubseteq_{\textit{\scriptsize lt-properties}}$ as the coarsest
precongruence w.r.t.\ $\|_S$ and injective renaming that respects all
linear time properties.  However, it happens to also be a
precongruence for all operators of CSP\@.

Linear time properties do not capture the entire observable behaviour
or processes in the neutral environment. Orthogonal to them are
\emph{possibility properties}, such as: a process \emph{may} do an
action $g$. As argued by Leslie Lamport, ``verifying possibility
properties tells you nothing interesting about a system''
\cite{Lam98}. Nevertheless, it is not hard to characterise the
coarsest precongruence for $\cal O$ that respects linear time
properties as well as all possibility properties, and thereby arguably
the entire observable behaviour of a processes in a neutral
environment. It is $\equiv_{\it NDFD}$, the symmetric closure of
$\sqsubseteq_{\it NDFD}$.

\section{Concluding remark}

The methodology of the paper is close in spirit to the work on testing
equivalences by Rocco De Nicola and Matthew Hennessy \cite{DH84}, and
the results in Sections~\ref{sec-safety} and~\ref{sec-liveness} are
comparable as well. The notion of \emph{must testing} of \cite{DH84}
could be reinterpreted as a way to test liveness properties, and
hence, unsurprisingly, my preorder $\sqsubseteq_{\it liveness}$ is
exactly the must-testing preorder of \cite{DH84}. However, my safety
preorder is exactly the \emph{inverse} of the \emph{may testing}
preorder of \cite{DH84}.  This can be explained by thinking, in the
context of may testing, of the ``success''-action $\omega$ as marking
a state of \emph{failure}, rather than one of \emph{success}.  Now the
property of whether a process may reach $\omega$ is exactly the
negation of whether it will always avoid $\omega$.  This turns
may-testing around, from testing certain possibility properties, to
testing safety properties.
It remains to elaborate a theory of testing that captures the concept
of conditional liveness.

\bibliographystyle{eptcs}

\end{document}